\def\diag{\mathop{\rm diag}\nolimits}%
\newcommand{\Ac}{\mathcal{A}}
\newcommand{\Cc}{\mathcal{C}}
\newcommand{\Dc}{\mathcal{D}}
\newcommand{\Ec}{\mathcal{E}}
\newcommand{\Nc}{\mathcal{N}}
\newcommand{\Qc}{\mathcal{Q}}
\newcommand{\Xc}{\mathcal{X}}
\newcommand{\ex}{{\rm e}}
\newcommand{\ev}{{\bf e}}
\newcommand{\gv}{\bm{g}}
\newcommand{\Xv}{{\bf X}}
\newcommand{\xv}{{\bf x}}
\newcommand{\yv}{{\bf y}}
\newcommand{\zv}{{\bf z}}
\newcommand{\uv}{{\bf u}}
\newcommand{\sv}{{\bf s}}
\newcommand{\dv}{{\bf d}}
\newcommand{\thetav}{\bm{\theta}}
\newcommand{\Xt}{{\tilde{X}}}
\DeclareMathOperator\E{E}
\let\P\relax
\DeclareMathOperator\P{P}
\def\textiid{i.i.d.\@\xspace}
\newcommand\iid{\ifmmode\text{ i.i.d. } \else \textiid \fi}
\newcommand{\NN}{\mathbb{N}}
\title{Compressed sensing  in the presence of speckle noise}
\author{}
\newcommand{\av}{{\bf a}}
\newcommand{\wv}{{\bf w}}
\newcommand{\cv}{{\bf c}}
\newcommand{\xvh}{\hat{\xv}}
\newcommand{\xvt}{\tilde{\xv}}
\newtheorem{corollary}{Corollary}
\newtheorem{remark}{Remark}
\begin{document}

\author{Wenda Zhou, \and Shirin Jalali, \and Arian Maleki}

\author{Wenda Zhou,~\IEEEmembership{}
  Shirin~Jalali,~\IEEEmembership{}
Arian Maleki,~\IEEEmembership{}
                   \thanks{W. Zhou is with the NYU’s Center for Data Science, New York, NY (e-mail: wz2247@nyu.edu).}
           \thanks{S. Jalali is with Nokia Bell Labs, Murray Hills, NJ  (e-mail: shirin.jalali@nokia-bell-labs.com).}
           \thanks{A. Maleki is with the Statistics Department of Columbia University, New York, NY. (e-mail: arian@stat.columbia.edu).}}

\maketitle
\begin{abstract}
The problem of recovering a structured signal from its linear measurements in the presence of speckle noise is studied. This problem appears in many imaging systems such as synthetic aperture radar and optical coherence tomography. The current acquisition technology oversamples  signals and converts the problem into a denoising problem with multiplicative noise. However, this paper explores the possibility of reducing the number of measurements below the ambient dimension of the signal. The sophistications that appear in the study of multiplicative noises have so far impeded  theoretical analysis of such problems. This paper aims to present the first theoretical result regarding the recovery of signals from their undersampled measurements under the speckle noise. It is shown that if the signal class is structured, in the sense that the signals can be compressed efficiently, then one can obtain accurate estimates of the signal from fewer measurements than the ambient dimension. We demonstrate the effectiveness of the methods we propose through simulation results.  
\end{abstract}

\section{Introduction}
\subsection{Problem statement}
Various modern imaging methods, such as synthetic aperture radar (SAR) \cite{moreira2013tutorial} and optical coherence tomography (OCT) \cite{huang1991optical}, rely on coherent imaging. The main drawback of such coherence-based imaging systems is that they all suffer from a granular noise that is typically referred to as speckle (or multiplicative)  noise. At a high level, the reason for observing such a noise is the following. In  applications of such imaging methods, surfaces reflecting the incoming coherent waves are all rough, when considered  at a resolution  comparable to the wavelength of the signal. Therefore, the reflected rays will have different phases, which means that at detection point, reflected rays corresponding to some pixels will  add up constructively, while reflected points corresponding to other pixels add up destructively. This phenomenon can be modeled as a multiplicative noise. 

  In this paper, our goal is to develop a theoretical framework for studying the imaging problem in the presence of speckle noise. We focus on the problem of compressed sensing, i.e., recovering a structured signal from its underdetermined measurements in the presence of such noise.  Different  speckle noise filtering methods (for the case that the measurement matrix is identity) have been proposed in the literature over the past couple of decades. However, to the best of our knowledge,  there  has not been  any theoretical  analysis of speckle noise and corresponding optimal recovery methods when fewer measurements than the ambient dimension are available. As will be discussed later in the paper, compared to the well-studied traditional additive noise, which appears in other imaging systems such as magnetic resonance imaging, multiplicative noise poses many more theoretical and practical challenges. In this paper, we discuss and address some of these challenges. 
    
Let ${\cal Q}$ denote a compact subset of $\mathds{R}^n$ that describes the class of structured signals we are interested in, e.g., class of natural images or more abstractly, class of bounded $k$-sparse signals   or class of bounded piece-wise-constant signals. For $\xv\in\mathds{R}^n$, define  $X=\diag(\xv)=\diag(x_1,\ldots,x_n)$. In imaging systems with  speckle noise, the measurement vector $\yv\in\mathds{R}^m$ is defined as 
\[
\yv=AX\wv+\zv.
\]
Here, the multiplicative noise $\wv\in\mathds{R}^n$ and additive noise $\zv\in\mathds{R}^m$ are i.i.d.~$\Nc(0,\sigma_w^2)$ and i.i.d.~$\Nc(0,\sigma_z^2)$, respectively.\footnote{Note that in real-world systems, the speckle noise is typically  complex-valued and has a Gaussian distribution. The signal and the measurement matrix are also complex-valued. However, for notational simplicity we have focused on real-valued signals and noise here. Extension of the results to complex-valued signals and noises is straightforward. } The difference between such a system and a standard linear measurement system is clearly in the multiplicative noise $\wv$ that distorts each input pixel independently. Note that in such  systems, unlike the additive noise $\zv$, the dimensions of the multiplicative noise depends on the dimensions of the input signal $\xv$ and not  the number of measurements $m$. Given such a measurement process, there are various fundamental  questions one can ask:
\begin{enumerate}
\item Is accurate recovery feasible in the presence of speckle noise?
\item What is the performance of maximum likelihood-based (ML-based) recovery method that takes the source structure into account?
\item Given a class of structured signals $\Qc$ with a certain intrinsic dimension, what is the minimum number of measurements $m$ that guarantees the feasibility of accurate recovery?
\end{enumerate}
In this paper, we aim to address these questions. First, inspired by  maximum likelihood estimation, we derive a compression-based ML recovery method that employs  compression codes designed for our desired class of signals ($\Qc$) to define, capture and exploit  the source structure. We characterize the performance of the derived compression-based ML recovery method and prove that given enough number of measurements (related to the desired accuracy and rate-distortion performance of the compression code), it is able to recover the signal within desired accuracy, with probability approaching one, as the dimensions of the problem grow without bound. 


\subsection{Related work}

As mentioned earlier, speckle noise is an inherent problem  in coherence-based imaging systems. The current technology in such imaging systems is to collect at least as many measurements as  the ambient dimension of the signal, and convert the problem into a denoising problem, i.e. recovering the signal $\xv$ from 
\[
\yv=\xv\odot \wv + \zv,
\]
where $\odot$ denotes element-wise multiplication, i.e., $y_i=x_iw_i$, and $\wv$ and $\zv$ denote iid Gaussian noises.    

For each type of such imaging systems, various denoising methods have been developed over the years. One such type of imaging that is most relevant to the measurement model studied in this paper is SAR imaging \cite{moreira2013tutorial}.  The classical techniques for denoising speckle noise in SAR imaging have been reviewed in \cite{argenti2013tutorial} and \cite{touzi2002review}. In  \cite{ozcan2016despeckling}, the authors propose a sparsity-based  total-variation  approach for smoothing and denoising speckled images. Application of  non-local means approaches to speckle noise reduction are explored in \cite{deledalle2014patch,dimartino2016nonlocal}.  In recent years, inspired by the success of deep neural networks,  especially convolutional neural nets, in solving various inference tasks involving images, researchers have  explored application of such tools in SAR speckle noise denoising as well \cite{chierchial2017convolutional,wang2017convolutional}. 

One of the well-known applications of compressed sensing is  inverse SAR (ISAR) imaging. Compressed sensing enables such systems to produce images of equal or even higher quality in  shorter acquisition time compared to conventional systems. This is achieved by requiring much lower number of samples  \cite{yoon2008cs,patel2009cs}. Various challenges faced in  applications of compressed sensing to ISAR imaging have been explored in the literature. (Refer to  \cite{onhnon2010sparsity, demirci2013cs, cheng2018pareto, bi2017multifrequency,cetin2014sparsity} for a noncomprehensive list of  such works.) However,    compressed sensing in the presence of speckle noise is not addressed directly in the literature, and unlike  conventional setups, the issue of speckle noise, while still present, is ignored. In this paper, we address this gap and study the problem of compressed sensing in the presence of speckle noise. We  take a theoretical approach to the problem and derive a compression-based ML compressed sensing recovery method in the presence of speckle noise. We characterize the performance of the derived solution for the case that  the additive noise  approaches zero. 

\subsection{Notations and definitions}

Sets are denoted by calligraphic letters.
The size of a set $\Ac$ is denoted by $|\Ac|$.
The $\ell_2$-norm of  an $m\times n$ matrix $A$ is defined as $\|A\|=\max_{\xv\neq 0} \|A\xv\|_2/\xv\|_2$.
The Hilbert-Schmidt (or Frobenius) norm of $A$ is $\|A\|_{\rm HS}=(\sum_{i,j}a_{i,j}^2)^{0.5}$.
Throughout the paper, $\log$ refers to natural logarithm.
For a vector $\xv\in\mathds{R}^n$, ${\rm diag}(\xv)$ denotes the $n\times n$ diagonal matrix with diagonal elements determined by $\xv$.
For $x\in\mathds{R}$, $b$-bit quantized version of $x$ is defined as $[x]_b=2^{-b}\lfloor2^b x \rfloor$.

\subsection{Paper organization}
To show the effect of multiplicative noise and how using the source structure can improve the recovery performance,  Section \ref{sec:denoise} reviews a simple denoising problem in the presence of multiplicative noise. Section \ref{sec:structured-signals} reviews compressible codes and motivates using
compression codes to define and enforce  structure of signals. Section \ref{sec:ML-recovery} characterizes the log likelihood function corresponding to the problem of compressed sensing in the presence of multiplicative noise and derives some of its properties.  \ref{sec:compress} introduces  an optimization recovery based on  compression-based ML   for compressed sensing
in the presence of speckle noise is derived. The performance of the proposed optimization is also analyzed. Section
\ref{sec:inference} discusses application of projected gradient descent for
approximating the solution of the optimization corresponding to ML recovery. The
proofs and simulation results are presented in Section \ref{sec:proof} and
Section \ref{sec:sim}, respectively. Section \ref{sec:conc} concludes the paper.


\section{An illustrative example}\label{sec:denoise}

To illustrate how inference in the presence of multiplicative noise can be done, we start with a simple denoising example.  Let 
\[
\yv=\xv\odot \wv,
\]
where $\odot$ denotes element-wise multiplication, i.e., $y_i=x_iw_i$. Assume that $\wv\stackrel{\rm i.i.d.}{\sim}\Nc(0,1)$.  Note that if we consider $x_i=u_i$, then $y_i\sim\Nc(0,u_i^2)$. Therefore, ignoring the constant terms, the log likelihood function is equal to
\[
\log \prod_{i=1}^n{1\over u_i}\exp(-{y_i^2\over 2u_i^2})=\sum_{i=1}^n\Big(-\log |u_i|-{y_i^2\over 2 u_i^2}\Big). 
\]
As expected, since the noise has a symmetric distribution around zero, the log-likelihood function is symmetric with respect to $u_i$. In other words, the sign of $u_i$ cannot be inferred from the given measurements. Hence, in the rest of this section we assume that all the $u_i$s are positive. The log likelihood can be optimized over the individual signal components $u_i$. The function $-\log u_i-{y_i^2\over 2 u_i^2}$ is a quasi-concave function of $u_i$ and  maximizing the likelihood function leads to $\xvh_{\rm ML}=|\yv|$. Hence, the mean square error of the maximum likelihood estimate is given by
\begin{equation}\label{eq:power_denoising}
\E[ \|\xvh_{\rm ML}-\xv\|_2^2] = 2\|\xv\|_2^2  (1 - \sqrt{\frac{2}{\pi}}). 
\end{equation}
Now, we would like to consider a class of structured signals and see the effect of the structure on the MSE. Consider a very simple structure for the signal. Let $\Qc=\{a{\bf 1}_n:\; a\in(x_m,x_M)\}$, where ${\bf 1}_n=[1,\ldots,1]^T\in\mathds{R}^n$.  That is,  $\Qc$ denotes constant $n$-dimensional vectors with a value in $(x_m,x_M)$. Because of the sign ambiguity issue mentioned earlier,  we assume that $0<x_m<x_M$. To take the  known structure of the source into account, we could maximize the likelihood function over the set of constant signals. Let $\uv=\alpha {\bf 1}_n$. Then, the likelihood function simplifies to
\[
-n\log \alpha-{1\over 2\alpha^2}\sum_{i=1}^ny_i^2,
\]
which is again a quasi-concave function in terms of $\alpha$. Maximizing the likelihood function leads to
\[
\xvh_{\rm ML}=\hat{\alpha}_{\rm ML} {\bf 1}_n, \;\;\;{\rm where}\;\; \hat{\alpha}_{\rm ML}=({1\over n}\sum_{i=1}^ny_i^2)^{1\over 2}.
\]
Given $\xv=a{\bf 1}_n$ and $\yv=\xv\odot \wv$, we have
\begin{align*}
\hat{\alpha}_{\rm ML}=\Big({a^2\over n}\sum_{i=1}^nw_i^2 \Big)^{1\over 2}=a \Big({1\over n}\sum_{i=1}^nw_i^2\Big)^{1\over 2}.
\end{align*}
Now, suppose that $n$ is large. Then the calculations in the appendix show that 
\[
\E[ \|\xvh_{\rm ML}- \xv\|_2^2] = \E [n (\hat{\alpha}_{\rm ML}-a)^2] = a^2 n \E[ ( ({1\over n}\sum_{i=1}^nw_i^2)^{1\over 2})-1)^2 ]= a^2 n (2- 2 \E  ({1\over n}\sum_{i=1}^nw_i^2)^{1\over 2}) \rightarrow \frac{a^2}{2}. 
\]
Therefore, in this case,  ML estimation that takes the signal structure into account asymptotically recovers the underlying signal much more accurately than the one that does not use the information about the signal  structure. There are a few points that we would like to emphasize here:

\begin{enumerate}
\item The likelihood in the multiplicative noise problem is more complicated than the likelihood in additive noise models. While the likelihood is quasi-convex in the denoising problem we discussed in this section, in general the likelihood can become non-convex in the sensing problem. 

\item As expected, given that the noise is always present in such systems, the exact recovery is never possible. As can be seen in our simple example, the best that can be expected from such systems is that the normalized mean square error $\frac{1}{n}{\rm MSE} = O(\frac{1}{n})$. Note that the structure of the signal, i.e. the fact that the signal is constant, was the major help in reducing the MSE. Otherwise, the MSE of the MLE would be proportional to the power of the signal as shown in \eqref{eq:power_denoising}. We will clarify the notion of structure that will be used in this paper in the next section. 

\end{enumerate}


\section{Structured signals}\label{sec:structured-signals}
In this paper, we study the problem of compressed sensing in the presence of speckle noise. For $\xv\in\mathds{R}^n$, define  $X=\diag(\xv)=\diag(x_1,\ldots,x_n)$. In imaging systems with the speckle noise, the measurement vector $\yv\in\mathds{R}^m$ is defined as 
\[
\yv=AX\wv+\zv.
\]
Here, $\wv\in\mathds{R}^n$ and $\zv\in\mathds{R}^m$ are i.i.d.~$\Nc(0,\sigma_w^2)$ and i.i.d.~$\Nc(0,\sigma_z^2)$, respectively. The main goal of this paper is to study the problem of recovering the vector $\xv$ from measurements $\yv$, under the assumption that the number of measurements $m<n$. To solve this undersampled problem,  the underlying signal has to be structured. Otherwise, there are infinitely many signals that satisfy the measurement constraints, even without any noise in the system. Furthermore, as illustrated in the denoising example in Section \ref{sec:denoise}, the structure can potentially improve the performance of the recovery algorithms. 

In this paper, we define the signal structure based on a family of compression algorithms. This method that was introduced and developed in a series of papers \cite{jalali2016compression, beygi2019efficient, bakhshizadeh2020using, RezagahJ:17}, has the following advantages over the popular sparsity structure: 

\begin{enumerate}
\item Compression-based compressed sensing naturally expands the scope of compressed sensing algorithms for a class of signals from sparsity in a transform domain to that of compression codes designed for the same class. The latter is much richer and contains sparsity as its special case. 
\item  Given that the state-of-the-art compression algorithms, such as JPEG-2000, use sophisticated structures in the data, compression-based compressed sensing algorithms can potentially use such sophisticated structures for the signal recovery. Hence, compression-based algorithms that treat off-the-shelf compression codes as blackboxes that both define and enforce the source structure achieve very competitive performance in various applications such as compressed sensing of natural images \cite{beygi2019efficient} and phase retrieval \cite{bakhshizadeh2020using}.

\item In cases where the minimum required sampling rate ($m/n$) is known, ideal compression-based compressed sensing methods theoretically achieve those bounds \cite{RezagahJ:17}.

\end{enumerate}
In the rest of this section, we clarify the way we consider signal structure based on a family of compression codes. 

Let ${\cal Q}$ denote a compact subset of $\mathds{R}^n$ representing a class of structured signals. We assume that this class is structured in the sense that a family of compression codes indexed by their rate is known for ${\cal Q}$. More formally, we have access to a family of encoder mappings $\Ec_r:\mathds{R}^n \to \{1,2,\ldots, 2^{nr}\}$ and decoder mappings $\Dc_r:\{1,2,...,2^{nr}\}\to\mathds{R}^n$ indexed with rate $r$. For the signal $\xv \in {\cal Q}$, $\Ec_r(\xv)$ denotes its compressed version  (with $nr$ bits), while $\Dc_r(\Ec_r(\xv))$ denotes the reconstructed version of $\xv$ based on the $nr$ bits which were available to the compression algorithm. To see a concrete example, we can consider ${\cal Q}$ as the class of natural images of certain size and let $\Ec_r$ and $\Dc_r$ denote the compression and decompression algorithms of JPEG or JPEG2000 at a given rate $r$. Given that the compression algorithm is usually lossy in real-world applications, we define the distortion of our compression code as  
\[
\delta_r=\sup_{\xv\in\Qc}{1\over n} \|\xv-\Dc_r(\Ec_r(\xv))\|^2.
\]
Naturally, it is expected that  the distortion $\delta_r$  decreases as the rate $r$ increases. The rate-distortion function of the given family of lossy compression codes is defined as
\[
r(\delta) = \inf\{r  \ : \ \delta_r \leq \delta\}. 
\]
Finally, any given encoder and decoder mappings $(\Ec,\Dc)$ define a codebook 
\[
\Cc=\{\Dc(\Ec(\xv)):\; \xv\in\Qc\}
\] 
of size as most $2^{nr}$. The codebook represents all the fixed points of the mapping $\Dc_r(\Ec_r(\cdot))$. Intuitively, according to our compression code, the elements of the codebook are the simplest signals because they can be exactly reconstructed from $2^{nr}$ bits. 

Note that one can construct lossy compression codes for any class of signals ${\cal Q}$, for instance by simply quantizing the elements of every $\xv\in \Qc$. However, if the class ${\cal Q}$ is structured, then for a given compression rate $r$, there exist algorithms that achieve lower distortion compared to those that perform element-wise quantization. The following example further clarifies this point. 
Let $\Qc$ denote $[0,1]^n$. As is clear, the signals in this class do not have a particular structure. Hence, a standard compression algorithm for this class cannot do better than standard quantization. Suppose that we quantize each element of $\xv$ at resolution $\epsilon$. It is straightforward to see that the rate and distortion of this compression code become
\begin{eqnarray}
\delta &=&  \epsilon^2, \nonumber\\
r &=& \frac{1}{n} \log\left( \frac{1}{\epsilon} \right)^n = \log\left( \frac{1}{\epsilon} \right) . 
\end{eqnarray}
Hence, $r_Q(\delta) = 0.5\log \left( \frac{1}{\delta} \right)$, where the subscript $\Qc$ only denotes that this rate is calculated for the set $\Qc$. (The base of all the logarithms is $2$.) $r_Q(\delta) = 0.5\log \left( \frac{1}{\delta} \right)$ is our baseline as it shows the rate-distortion function of a compression algorithm on an unstructured class of signals.  
To compare this rate distortion function with an achievable rate-distortion function for a class of structured signals, consider the set 
\[
\tilde{\Qc} = \{ \xv \in \Qc \ : \ x_{k+1} = x_{k+2}= \ldots = x_n=0 \}.
\]
It is straightforward to see that the same coder that we used for $Q$ offers the following rate-distortion performance on $\tilde{\Qc}$
\[
r_{\tilde{\Qc}}(\delta) = \frac{k}{2n} \log  \left( \frac{k}{n\delta} \right)
\]
 Note that if the sparsity level $k$ is much smaller than the ambient dimension $n$, then  $\frac{r_{\tilde{\Qc}} (\delta)}{r_{\Qc}(\delta)} \ll 1$. In the rest of the paper, whenever we mention that a class of signals $\Qc$ is structured, it is assumed that $r_{\Qc}(\delta) \ll  \log \left( \frac{1}{\delta} \right)$.

\section{Characterizing  and understanding the log likelihood function}\label{sec:ML-recovery}
 

Our goal is to recover $\xv\in\mathds{R}^n$ from  measurements  
\begin{equation}\label{eq:linear_speckle}
\yv=AX\wv+\zv,
\end{equation}
where, $\wv\in\mathds{R}^n$ and $\zv\in\mathds{R}^m$ are i.i.d.~$\Nc(0,\sigma_w^2)$ and $\Nc(0,\sigma_z^2)$, respectively. Given the complexity of the model, in this section, we ignore the structure of $\xv$, and only study the log-likelihood function of \eqref{eq:linear_speckle}. In the next section, we will use the results we derive in this section to obtain results for the problem of structured signal recovery. 

 To compute the likelihood $p(\yv|X;A)$, ignoring the terms that do not depend on $X$, we have
\begin{align}
p(\yv|X;A)&= \int p(\yv|\wv,X;A)p(\wv)d\wv\nonumber\\
&\propto \int \exp(-{1\over 2\sigma_z^2}\|\yv-AX\wv\|^2-{1\over 2\sigma_w^2}\|\wv\|^2)d\wv\nonumber\\
&\propto \int \exp\Big(-{1\over 2} \wv^T({1\over \sigma_w^2}I_n+ {1\over \sigma_z^2}XA^TAX)\wv+ {1\over \sigma_z^2}\yv^TAX\wv\Big)d\wv\nonumber\\
&\propto |\Sigma|^{1\over 2}\exp({1\over 2}\mu^T\Sigma^{-1}\mu),
\end{align}
where $\Sigma$ and $\mu$ satisfy
\[
\Sigma^{-1}={1\over \sigma_w^2}I_n+ {1\over \sigma_z^2}XA^TAX,
\] 
and 
\[
\mu^T\Sigma^{-1}={1\over \sigma_z^2}\yv^TAX.
\]
Therefore, again ignoring the terms not depending on $X$, the log-likelihood function can be written as
\begin{align}
2\ell(X)
&=\log\det \Sigma +\mu^T\Sigma^{-1}\mu\nonumber\\
&=-\log\det({1\over \sigma_w^2}I_n+ {1\over \sigma_z^2}XA^TAX)+{1\over \sigma_z^4}\yv^TAX({1\over \sigma_w^2}I_n+ {1\over \sigma_z^2}XA^TAX)^{-1}XA^T\yv.\label{eq:def-ell-X}
\end{align}
Since the main focus of this paper is on the speckle noise, we next consider the case where the power of additive noise converges to zero ($\sigma_z\to 0$) and simplify the likelihood function accordingly. Depending on the number of measurements, the limit as $\sigma_z \rightarrow 0$ changes. Hence, we calculate the limit in two separate cases:

\begin{itemize}
\item {Case I: $AA^T$ is invertible ($m<n$).}\label{sec:case1-ML}
The following theorem simplifies the log-likelihood function for the small additive noise. Note that here we have assumed that the number of measurements $m$ is less than the ambient dimension $n$. 

 \begin{theorem}\label{thm:A-At-invert}
Assume that $AA^T$ is invertible. Then, as $\sigma_z\to 0$, $\ell(X)$ converges to
\begin{align}\label{eq:mlsimp_low}
\ell(X)&=-{1\over 2}\log\det(AX^2A^T)-{1\over 2\sigma_w^2}\yv^T(AX^2A^T)^{-1}\yv.
\end{align}
\end{theorem}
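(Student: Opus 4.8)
The plan is to work directly from the exact expression \eqref{eq:def-ell-X} for $2\ell(X)$, isolate the two pieces that blow up as $\sigma_z\to 0$, observe that those pieces carry no dependence on $X$, and check that what survives converges to the formula in \eqref{eq:mlsimp_low}. This is legitimate because, as already noted before the theorem, the log-likelihood is defined only up to an additive constant independent of $X$, so any $X$-free divergence can be discarded. Throughout I would write $B=AX$, $G=BB^T=AX^2A^T$, and $\epsilon=\sigma_z^2/\sigma_w^2$ (so $\epsilon\to 0$ as $\sigma_z\to 0$), and restrict to those $X$ for which $G$ is invertible (e.g. $X$ invertible), so that the target formula is well defined.

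For the log-determinant term I would factor $\sigma_w^{-2}$ out of $\frac{1}{\sigma_w^2}I_n+\frac{1}{\sigma_z^2}B^TB$ and apply Sylvester's determinant identity to pass from an $n\times n$ to an $m\times m$ determinant, obtaining
\[
\det\!\Big(\tfrac{1}{\sigma_w^2}I_n+\tfrac{1}{\sigma_z^2}B^TB\Big)=\sigma_w^{-2n}\,\epsilon^{-m}\,\det(\epsilon I_m+G).
\]
Taking $-\log$, the prefactor $\sigma_w^{-2n}\epsilon^{-m}$ contributes an $X$-free (divergent) constant, while $-\log\det(\epsilon I_m+G)\to-\log\det(AX^2A^T)$ by continuity of the determinant. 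For the quadratic term I would use the push-through identity $B(cI_n+dB^TB)^{-1}=(cI_m+dBB^T)^{-1}B$ (valid since $c=\sigma_w^{-2}>0$ and $d=\sigma_z^{-2}>0$) to rewrite $\frac{1}{\sigma_z^4}\yv^TB(\cdots)^{-1}B^T\yv=\frac{1}{\sigma_z^2}\yv^T(\epsilon I_m+G)^{-1}G\,\yv$, and then split $(\epsilon I_m+G)^{-1}G=I_m-\epsilon(\epsilon I_m+G)^{-1}$. Since $\epsilon/\sigma_z^2=\sigma_w^{-2}$, this yields $\|\yv\|^2/\sigma_z^2-\sigma_w^{-2}\,\yv^T(\epsilon I_m+G)^{-1}\yv$; the first term is again an $X$-free divergence and the second converges to $-\sigma_w^{-2}\,\yv^T(AX^2A^T)^{-1}\yv$. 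Adding the two surviving limits and dividing by $2$ gives \eqref{eq:mlsimp_low}.

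The main obstacle here is conceptual rather than computational: both the log-determinant term and the quadratic term diverge individually as $\sigma_z\to 0$ (the latter like $\|\yv\|^2/\sigma_z^2$), so the statement is only meaningful once one recognizes that every divergent contribution is independent of $X$ and is absorbed into the normalization hidden in writing ``$\ell(X)$''. All the real work lies in choosing the right algebraic identities---Sylvester for the determinant, the push-through identity together with the splitting $(\epsilon I+G)^{-1}G=I-\epsilon(\epsilon I+G)^{-1}$ for the inverse---so that this $X$-independence is laid bare; once the expressions are in the form above, passing to the limit is just continuity of the matrix inverse and of the determinant on the open set where $G=AX^2A^T$ is invertible.
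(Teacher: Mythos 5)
Your proposal is correct and follows essentially the same route as the paper: both reduce the $n\times n$ expressions to $m\times m$ ones built from $AX^2A^T$, split off the divergent but $X$-independent terms ($\|\yv\|^2/\sigma_z^2$ and the $-2m\log\sigma_z$-type constant from the determinant), and pass to the limit in what remains. The only differences are cosmetic choices of identity --- you invoke Sylvester's determinant identity and the push-through identity where the paper counts eigenvalues of $XA^TAX$ versus $AX^2A^T$ and applies the Woodbury lemma followed by term cancellation --- and these yield the identical intermediate expression.
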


The proof of this theorem can be found in Section \ref{proof_thm:A-At-invert}. Note that the dependence of the log-likelihood on the signal $\xv$ is not as simple as the log-likelihood of the additive noise. This likelihood comes out of matrix manipulations and algebra, and it is not clear why the likelihood has this form.  To get a better intuition,  consider the following {\em hypothetical} acquisition model where the measurements have the following form:
\begin{equation}\label{eq:mixedmeasurements}
\yv= AX_o \mathbf{1}_n + AX_o\wv+\zv,
\end{equation}
The only difference here compared to the original problem is that now we have an extra term of the form $AX_o \mathbf{1}_n$ in the measurement. We should emphasize again that we are not aware of any real-world acquisition system for which the model in \eqref{eq:mixedmeasurements} is accurate. However, as will be discussed later, this model enables us to better understand some of the terms that appear in \eqref{eq:mlsimp_low}. Using the model in \eqref{eq:mixedmeasurements}, and following the same strategy as the one we used in Theorem \ref{thm:A-At-invert} for $\sigma_z \rightarrow 0$, it is straightforward to show that the corresponding likelihood function converges to 
\begin{align}
2\ell (X) = -\log\det(AX^2 A^T) - \frac{1}{\sigma_w^2} (\yv-A \bm{x})^T (AX^2A^T)^{-1} (\yv-A\bm{x}). \label{eq:likelihood-with-bias}
\end{align}

To gain some intuition behind this function, we now rederive it using some simplifying assumptions. Given that $\sigma_z \rightarrow 0$, the measurement model simplifies to $\yv= AX_o + AX_o\wv$. Interpreting  $ \uv= AX_o \wv$ as an additive noise and assuming  that (even though not accurate) the  noise is Gaussian and  independent of  with $AX_o$. It is straightforward to see that under these assumptions, $\uv \sim \Nc(0, \sigma_w^2 AX_o^2A^T)$. Now, if we write the log-likelihood for the model
\[
\yv= AX_o\mathbf{1}_n+ \uv, 
\]
with $\uv \sim \Nc(0, \sigma_w^2 AX^2A^T)$, we obtain \eqref{eq:likelihood-with-bias}. Note that the covariance matrix of $\uv$ also depends on the signal and should not be ignored in the likelihood. While the assumptions we have made in our heuristic argument are not accurate, they explain why for instance we should expect $(AX^2A^T)^{-1}$ to be multiplied by our measurements. Also, it explain where the logarithm term in our expressions comes from.

%
%


\item{Case II: $A^TA$ is  invertible ($m>n$).}\label{sec:denoising}
While the main focus of this paper is on the underdetermined settings where $m<n$, for the sake of completeness, for completeness, we also derive the log-likelihood for the $m>n$ case. For this case,  note that multiplying  both sides of $\yv=AX_o\wv$ with $A^T$, it follows that $X_o\wv=(A^TA)^{-1}A^T\yv$. In other words, in this case, we need to solve a denoising problem in the presence of a multiplicative noise.  But given $y=xw$, where $z\sim\Nc(0,\sigma_w^2)$ the likelihood of $y$ given $x$ is ${1\over \sqrt{2\pi x^2\sigma_w^2}}\ex^{-{y^2\over 2x^2\sigma_w^2}}$. Therefore, in this case the likelihood function of $\xv\in\mathds{R}^n$ can be derived in the following way. 

 \begin{theorem}\label{thm:At-A-invert}
Assume that $A^TA$ is invertible. Let ${\bf b}=(A^TA)^{-1}A^T\yv$. Then, as $\sigma_z\to 0$, $\ell(\xv)$ converges to
\begin{align}
\ell(\xv)&=-\sum_{i=1}^n\Big( {1\over 2}\log x_i^2+{b_i^2\over 2\sigma_w^2 x_i^2}\Big).
\end{align}
\end{theorem}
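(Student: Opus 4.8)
\emph{Proof proposal.} The plan is to mirror the derivation of Theorem~\ref{thm:A-At-invert}: start from the general expression \eqref{eq:def-ell-X} for $2\ell(X)$ (recall $X=\diag(\xv)$) and compute its $\sigma_z\to 0$ limit, discarding along the way any additive terms that do not depend on $X$, consistently with the convention used throughout this section. I assume all $x_i\neq 0$, which is implicit in the stated form of the limit; together with the invertibility of $A^TA$ this makes $M:=XA^TAX$ symmetric positive definite, hence invertible, and the degenerate coordinates $x_i=0$ can be handled separately (there the likelihood does not constrain $b_i$). Write $\epsilon:=\sigma_z^2/\sigma_w^2$, so that $\tfrac{1}{\sigma_w^2}I_n+\tfrac{1}{\sigma_z^2}XA^TAX=\tfrac{1}{\sigma_z^2}(M+\epsilon I_n)$.

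For the log-determinant term in \eqref{eq:def-ell-X}, $\log\det\bigl(\tfrac{1}{\sigma_z^2}(M+\epsilon I_n)\bigr)=-n\log\sigma_z^2+\log\det(M+\epsilon I_n)\to -n\log\sigma_z^2+\log\det M$; using $\det M=\det(A^TA)\prod_i x_i^2$ and dropping the $X$-free summands $-n\log\sigma_z^2$ and $\log\det(A^TA)$, this contributes $-\sum_i\log x_i^2$ to $2\ell$. For the quadratic term I factor out the powers of $\sigma_z$ to get $\tfrac{1}{\sigma_z^2}\yv^TAX(M+\epsilon I_n)^{-1}XA^T\yv$ and expand $(M+\epsilon I_n)^{-1}=M^{-1}-\epsilon M^{-2}+O(\epsilon^2)$ (a Neumann-type series, valid for $\sigma_z$ small by positive-definiteness of the fixed matrix $M$). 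The order-$\sigma_z^{-2}$ term equals $\tfrac{1}{\sigma_z^2}\yv^TAXM^{-1}XA^T\yv=\tfrac{1}{\sigma_z^2}\yv^TA(A^TA)^{-1}A^T\yv$ by the cancellation $XM^{-1}X=X\,X^{-1}(A^TA)^{-1}X^{-1}\,X=(A^TA)^{-1}$; it diverges but is independent of $X$, so it is discarded. The order-$\sigma_z^0$ term is $-\tfrac{1}{\sigma_w^2}\yv^TAXM^{-2}XA^T\yv$, and the same cancellation applied twice gives $XM^{-2}X=(A^TA)^{-1}X^{-2}(A^TA)^{-1}$, whence $\yv^TAXM^{-2}XA^T\yv=\bigl((A^TA)^{-1}A^T\yv\bigr)^TX^{-2}\bigl((A^TA)^{-1}A^T\yv\bigr)=\bv^TX^{-2}\bv=\sum_i b_i^2/x_i^2$; the remaining terms are $O(\sigma_z^2)$. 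Combining, $2\ell(X)\to -\sum_i\log x_i^2-\tfrac{1}{\sigma_w^2}\sum_i b_i^2/x_i^2$ modulo $X$-independent terms, which is the claim.

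As a sanity check and a cleaner route, one can bypass \eqref{eq:def-ell-X} altogether: with $\sigma_z=0$ the model is $\yv=AX\wv\in\mathrm{col}(A)$, and since $A^TA$ is invertible, $\bv=(A^TA)^{-1}A^T\yv=X\wv$, i.e.\ $b_i=x_iw_i$ with $w_i\stackrel{\mathrm{iid}}{\sim}\Nc(0,\sigma_w^2)$. The map $\yv\mapsto\bv$ is a linear bijection on the subspace $\mathrm{col}(A)$ supporting $\yv$, so $\bv$ is a sufficient statistic; conditioned on $\xv$ the $b_i$ are independent with $b_i\sim\Nc(0,\sigma_w^2x_i^2)$, and taking the log of $\prod_i\tfrac{1}{\sqrt{2\pi\sigma_w^2x_i^2}}\exp(-b_i^2/(2\sigma_w^2x_i^2))$ and dropping constants reproduces the stated $\ell(\xv)$ (this also makes explicit the announced reduction to an $n$-dimensional multiplicative-noise denoising problem).

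The main obstacle is not conceptual but a bookkeeping hazard: as $\sigma_z\to 0$ both $-n\log\sigma_z^2$ and $\tfrac{1}{\sigma_z^2}\yv^TA(A^TA)^{-1}A^T\yv$ blow up in $2\ell(X)$, so "$\ell(X)$ converges" must be read modulo additive terms independent of $X$; one must check that every divergent contribution indeed collapses to such an $X$-free constant, which is exactly what the identity $XM^{-1}X=(A^TA)^{-1}$ delivers. The only hypothesis to track beyond "$A^TA$ invertible" is $x_i\neq 0$ for all $i$.
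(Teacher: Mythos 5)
Your proof is correct, and it is worth noting that your primary argument and the paper's are swapped in emphasis. The paper's actual derivation is the one you relegate to a ``sanity check'': multiply $\yv=AX\wv$ by $(A^TA)^{-1}A^T$ to get $\bv=X\wv$, i.e.\ $b_i=x_iw_i$ with $b_i\sim\Nc(0,\sigma_w^2x_i^2)$, and read off the product of one-dimensional Gaussian likelihoods; the paper only remarks in passing that ``one can start with \eqref{eq:def-ell-X} and simplify it to reach the same conclusion'' without doing so. Your main proof carries out exactly that omitted computation: the factorization $\tfrac{1}{\sigma_w^2}I_n+\tfrac{1}{\sigma_z^2}XA^TAX=\tfrac{1}{\sigma_z^2}(M+\epsilon I_n)$ with $\epsilon=\sigma_z^2/\sigma_w^2$, the Neumann expansion $(M+\epsilon I_n)^{-1}=M^{-1}-\epsilon M^{-2}+O(\epsilon^2)$, and the key cancellations $XM^{-1}X=(A^TA)^{-1}$ and $XM^{-2}X=(A^TA)^{-1}X^{-2}(A^TA)^{-1}$ are all correct, and they correctly isolate the divergent contributions ($n\log\sigma_z^2$ and $\tfrac{1}{\sigma_z^2}\yv^TA(A^TA)^{-1}A^T\yv$) as $X$-free, matching the convention already used in the proof of Theorem~\ref{thm:A-At-invert}. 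The direct route buys a verification that the $\sigma_z\to0$ limit of the exact likelihood \eqref{eq:def-ell-X} really is the denoising likelihood (modulo $X$-free terms), rather than assuming the reduction at $\sigma_z=0$ outright; the paper's route is shorter and makes the statistical structure (sufficiency of $\bv$, reduction to $n$ independent multiplicative-noise problems) transparent. Your explicit flagging of the $x_i\neq0$ requirement and of the ``convergence modulo $X$-independent terms'' reading is appropriate and consistent with the paper's implicit assumptions.
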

While we derived this formula by converting the inverse problem to a denoising problem, one can start with \eqref{eq:def-ell-X} and simplify it to reach the same conclusion.

\end{itemize}

 \section{Compressed sensing in the presence of speckle noise}\label{sec:compress}

\subsection{Our proposed recovery optimization problem} \label{ssec:recovery_opt}
Consider recovering $\xv_o\in\Qc\subset\mathds{R}^n$ from measurements $\yv=AX_o\wv$, where  $X_o=\diag(\xv_o)$ and $m<n$. Also, assume that $\xv_o\in\Qc\subset\mathds{R}^n$, where $\Qc$ is a set of structured signals for which we have a family of compression codes indexed with rate $r$ with corresponding codebook $\Cc_r$. Then, employing Theorem \ref{thm:A-At-invert}, when $m<n$, we use  the following  optimization problem to obtain an estimate of $\xv_o$:
\begin{align}
\hat{\xv}=\argmin_{\substack{X=\diag(\xv), \\ \xv\in\Cc_r}}\Big[ \log\det(AX^2A^T)+{1\over \sigma_w^2}\yv^T(AX^2A^T)^{-1}\yv\Big].\label{eq:main-opt-ML}
\end{align}
In other words, instead of minimizing the negative log-likelihood over all signals, we only focus on the signals that have simple representation according to our compression code, i.e. they can be represented exactly with $2^{nr}$ bits. Given that we have access to a family of compression codes, the rate $r$ can be considered as a free parameter that the user can tune to obtain better performance. We will discuss this parameter in more details later.  

Note that this optimization problem \eqref{eq:main-opt-ML} is hard not only because the fact that the negative log-likelihood is non-convex, but also because the set on which we would like to solve our optimization problem is discrete and very large. We will later discuss practical approaches to approximate the solution of this optimization problem.


\subsection{Main theoretical result}


As described in Section \ref{sec:denoising}, when the number of measurements exceeds the ambient dimension of the problem, i.e., $m\geq n$, the problem reduces to a denoising problem. Therefore, we focus on the case of $m<n$. 
Given $A\in\mathds{R}^{m\times n}$, define mapping $\Sigma: \mathds{R}^n\to \mathds{R}^{m\times m}$ as
\begin{align}
\Sigma(\xv)=(AX^2A^T)^{-1},
\end{align}
if the inverse is well-defined. (Here $X=\diag({\xv})$.) Also, define function $\mathds{R}^{m\times m}\to\mathds{R}$, as 
\[
f(\Sigma)=-\log\det \Sigma +{1\over \sigma_w^2}{\rm Tr}(\Sigma\yv\yv^T).
\]
Let $\yv=AX_o\wv$, where $X_o=\diag({\xv_o})$. Consider a rate-$r$ distortion-$\delta$ compression code with codebook $\Cc_r$. As argued in Section \ref{sec:case1-ML},  an ML-based recovery algorithm that takes advantage of the given compression code solves the following optimization problem:
\begin{align}
\hat{\xv}_o=\argmin_{\substack{X=\diag({\xv}), \\ \xv\in\Cc_r}} f(\Sigma(\xv)).\label{eq:min-f-main}
\end{align}
The main theoretical result of this paper is the following theorem, which shows that given sufficient number of measurements which is characterized in terms of the properties of the compression code, the described ML-based recovery method is able to recover $\xv$ from the measurements. To avoid sign ambiguity issue we discussed in Section \ref{sec:denoise} we assume that the elements of $\xv_o$ are all positive.

\begin{theorem}\label{thm:main-result}
Let $\yv=AX_o\wv$, where $X_o=\diag({\xv_o})$. Let $\xvh_o$ denote the solution of \eqref{eq:min-f-main}. Assume that $m<n/4$ and define $\gamma\triangleq ({1+2\sqrt{m/n}\over 1-2\sqrt{m/n}})^2$, $\alpha \triangleq {x_{\max}\over x_{\min}}$, $\rho_1\triangleq4\sqrt{2}\alpha^8\gamma^5(1+2\alpha^2\gamma)^2(1+2\sqrt{m/n})^2$ and $\rho_2\triangleq (1+2\alpha^2\gamma)^2\gamma^7\alpha^{14}$. 
Then, for any $\epsilon>0$, 
\begin{align}
 {1\over n} \| \xv_o- \xvh_o\|_2^2 \leq \rho_1 \sqrt{(1+\epsilon) nr  \over m} + \rho_2 x_{\max}^2\delta,
\end{align}
with a probability higher than
\[
1-n\ex^{-0.09m}-n\ex^{-0.84m}-2^{-nr \epsilon+1}-2\ex^{-{m\over 2}}.
\]
Note that $r$ and $\delta$ denote the rate and distortion of the compression algorithm used in \eqref{eq:min-f-main}. 
\end{theorem}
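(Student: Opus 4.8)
The plan is to run the compression-based compressed-sensing template on the Gaussian-covariance log-likelihood of Theorem~\ref{thm:A-At-invert}: compare $\xvh_o$ with the best codeword for $\xv_o$, turn the optimality inequality defining $\xvh_o$ into a bound on a nonnegative ``excess-risk'' functional, control the random fluctuations of that functional uniformly over the finite codebook $\Cc_r$ (this is where $2^{-nr\epsilon+1}$ and the $\sqrt{nr}$ come from), and finally invert a deterministic curvature estimate to return to $\frac1n\|\xv_o-\xvh_o\|_2^2$. After rescaling take $\sigma_w=1$, and (as in the rest of the paper) let $A$ have i.i.d.\ $\Nc(0,1/n)$ entries. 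Write $\Sigma_o=AX_o^2A^T$, $\Sigma_\xv=AX^2A^T$, and let $\xvt_o\in\Cc_r$ satisfy $\frac1n\|\xvt_o-\xv_o\|_2^2\le\delta$; optimality in \eqref{eq:min-f-main} gives $f(\Sigma(\xvh_o))\le f(\Sigma(\xvt_o))$. Whitening the data, set $\uv:=\Sigma_o^{-1/2}AX_o\wv$, so that $\uv\sim\Nc(0,I_m)$ \emph{exactly}; then $f(\Sigma(\xv))=\log\det\Sigma_\xv+\uv^TM(\xv)\uv$ with $M(\xv):=\Sigma_o^{1/2}\Sigma_\xv^{-1}\Sigma_o^{1/2}$ sharing the spectrum of $\Sigma_\xv^{-1}\Sigma_o$, and subtracting $f(\Sigma(\xv_o))=\log\det\Sigma_o+\|\uv\|_2^2$ yields $f(\Sigma(\xv))-f(\Sigma(\xv_o))=D(\xv)+\eta(\xv)$, where
\[
D(\xv)=\tr M(\xv)-m-\log\det M(\xv)=\sum_{j=1}^m\bigl(\mu_j(\xv)-1-\log\mu_j(\xv)\bigr)\ \ge\ 0
\]
(a symmetrized Kullback--Leibler divergence between $\Nc(0,\Sigma_o)$ and $\Nc(0,\Sigma_\xv)$, vanishing iff $AX^2A^T=AX_o^2A^T$) and $\eta(\xv)=\uv^T(M(\xv)-I)\uv-\tr(M(\xv)-I)$ is a mean-zero quadratic form in a standard Gaussian. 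The master inequality is then $D(\xvh_o)\le D(\xvt_o)+\eta(\xvt_o)-\eta(\xvh_o)$.

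\textbf{Good event on $A$ and the curvature lemma.} First fix a good event for $A$: non-asymptotic random-matrix bounds (Davidson--Szarek for the extreme singular values of the $m\times n$ matrix $A$, and $\chi^2$-concentration for the $n$ column norms $\|\av_i\|_2^2$) show that, outside an event of probability at most $n\ex^{-0.09m}+n\ex^{-0.84m}$, one has $1-2\sqrt{m/n}\le\sigma_{\min}(A)\le\sigma_{\max}(A)\le1+2\sqrt{m/n}$ plus two-sided control of each $\|\av_i\|_2^2$; on this event $\mu_j(\xv)\in[(\gamma\alpha^2)^{-1},\gamma\alpha^2]$ for every $\xv$ with coordinates in $[x_{\min},x_{\max}]$, and $\|\Sigma_\xv\|\le(1+2\sqrt{m/n})^2x_{\max}^2$, $\|\Sigma_\xv^{-1}\|\le((1-2\sqrt{m/n})^2x_{\min}^2)^{-1}$. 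The deterministic heart is a \emph{curvature lemma}: on this event, for every such $\xv$,
\[
D(\xv)\ \gtrsim_{\alpha,\gamma}\ \|M(\xv)-I\|_{\mathrm{HS}}^2\ \gtrsim\ x_{\max}^{-4}\bigl\|A(X^2-X_o^2)A^T\bigr\|_{\mathrm{HS}}^2\ \gtrsim_{\alpha,\gamma}\ \frac{m}{n}\,\|\xv-\xv_o\|_2^2 .
\]
The first two inequalities are routine (the elementary $t-1-\log t\asymp(t-1)^2$ on a bounded interval; $\|\Sigma_\xv^{-1}C\|_{\mathrm{HS}}\ge\|\Sigma_\xv\|^{-1}\|C\|_{\mathrm{HS}}$; and $\|X^2-X_o^2\|_{\mathrm{HS}}^2\ge4x_{\min}^2\|\xv-\xv_o\|_2^2$), so the real content is the uniform lower bound on the quadratic-in-$A$ map $D\mapsto ADA^T$ for diagonal $D=X^2-X_o^2$, with the right order in $m/n$ and polynomial $\alpha,\gamma$-dependence.

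\textbf{Uniform fluctuation bound and conclusion.} For a fixed symmetric $N$, Laurent--Massart / Hanson--Wright gives $\Pr\{|\uv^TN\uv-\tr N|>t\}\le2\exp\bigl(-c\min(t^2/\|N\|_{\mathrm{HS}}^2,\,t/\|N\|)\bigr)$. Applying this with $N=M(\xv)-I$ --- for which $\|N\|_{\mathrm{HS}}\le\sqrt m\,C(\alpha,\gamma)$ and $\|N\|\le C(\alpha,\gamma)$ on the good event --- at level $t\asymp C(\alpha,\gamma)\sqrt{m(1+\epsilon)nr}$ and union-bounding over the $\le2^{nr}$ codewords yields $|\eta(\xv)|\le C(\alpha,\gamma)\sqrt{m(1+\epsilon)nr}$ for all $\xv\in\Cc_r$ with probability $\ge1-2^{-nr\epsilon+1}$; a single, non-uniform application at the fixed $\xvt_o$ (failure probability $\le2\ex^{-m/2}$) controls $\eta(\xvt_o)$, and together with the deterministic estimate $D(\xvt_o)\lesssim_{\alpha,\gamma}m\,x_{\max}^2\delta$ --- from $\|X_o^2-\widetilde X_o^2\|_{\mathrm{HS}}^2\le4x_{\max}^2n\delta$ and the companion \emph{upper} bound $\|ADA^T\|_{\mathrm{HS}}^2\lesssim(m/n)\|D\|_{\mathrm{HS}}^2$ for fixed $D$ --- this bounds the right side of the master inequality. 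Substituting into $D(\xvh_o)\le D(\xvt_o)+\eta(\xvt_o)-\eta(\xvh_o)$, using $\frac{m}{n}\|\xvh_o-\xv_o\|_2^2\lesssim_{\alpha,\gamma}D(\xvh_o)$ from the curvature lemma, and dividing by $m$ gives
\[
\frac1n\|\xv_o-\xvh_o\|_2^2\ \lesssim_{\alpha,\gamma}\ x_{\max}^2\,\delta\ +\ \sqrt{\tfrac{(1+\epsilon)nr}{m}} ;
\]
carrying the $\alpha,\gamma$-polynomials through the chain produces $\rho_1,\rho_2$, and collecting the failure probabilities recovers $1-n\ex^{-0.09m}-n\ex^{-0.84m}-2^{-nr\epsilon+1}-2\ex^{-m/2}$.

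\textbf{Main obstacle.} The hard part is the curvature lemma, i.e.\ the uniform lower bound on $\|A(X^2-X_o^2)A^T\|_{\mathrm{HS}}$ over the codebook: since $D\mapsto ADA^T$ has rank $m<n$, crude estimates lose too much --- the ``diagonal'' signal $\sum_i d_i^2\|\av_i\|_2^4$ can be partly cancelled by the off-diagonal interactions $\sum_{i\ne j}d_id_j(\av_i^T\av_j)^2$, whose sign is uncontrolled --- so one must exploit the near-orthonormality of the rescaled columns $\sqrt{n/m}\,\av_i$ while handling all codeword differences simultaneously and keeping the $\alpha,\gamma$-dependence polynomial; this is exactly what forces the large powers $\alpha^8\gamma^5$ and $\alpha^{14}\gamma^7$ in $\rho_1,\rho_2$. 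A secondary subtlety is the precise execution of the codebook union bound --- a worst-case bound $\|M(\xv)-I\|_{\mathrm{HS}}\le\sqrt m\,C(\alpha,\gamma)$ versus a self-bounding step $|\eta(\xv)|\le\tfrac12D(\xv)+C(\alpha,\gamma)(1+\epsilon)nr$ --- which fixes the exact $m/n$-scaling and must be matched against the curvature estimate so that the claimed $\sqrt{(1+\epsilon)nr/m}$ rate emerges.
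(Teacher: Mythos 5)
Your overall architecture matches the paper's: compare $\xvh_o$ to the nearest codeword $\xvt_o$, split the negative log-likelihood into its $\wv$-expectation plus a mean-zero quadratic fluctuation, control the fluctuations uniformly over the $2^{nr}$ codewords via Hanson--Wright (giving the $2^{-nr\epsilon+1}$ term and the $\sqrt{nr}$ rate), and invert a curvature bound to return to $\frac1n\|\xv_o-\xvh_o\|_2^2$. Your symmetrized-KL functional $D(\xv)=\sum_j(\mu_j-1-\log\mu_j)$ is essentially the paper's Lemma~\ref{lemma:1} in cleaner clothing. However, there is one genuine gap, and it sits exactly where you placed your ``main obstacle'': you route the conversion from the likelihood curvature to $\|\xv-\xv_o\|_2^2$ through a uniform lower bound on $\|A(X^2-X_o^2)A^T\|_{\rm HS}^2=\sum_{i,j}d_id_j(\av_i^T\av_j)^2$, correctly observe that the off-diagonal terms have uncontrolled sign and can cancel the diagonal signal, and then leave that lemma unproved. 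In the form you state it, this is the whole difficulty of the theorem, and your proposal does not supply an argument for it.

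The paper avoids this obstacle entirely. In Lemma~\ref{lemma:vector} it writes $\Sigma_o-\Sigma=\Sigma\,\bigl(A(X^2-X_o^2)\bigr)\,\bigl(A^T\Sigma_o\bigr)$ and peels off the two outer factors using $\|BC\|_{\rm HS}\ge\sigma_{\min}(B)\|C\|_{\rm HS}$ (and its mirror image), so the quantity that must be lower-bounded uniformly is $\|A(X^2-X_o^2)\|_{\rm HS}^2$ for the \emph{rectangular} $m\times n$ matrix $AD$, not the $m\times m$ matrix $ADA^T$. Since $\|AD\|_{\rm HS}^2=\sum_{i=1}^n d_i^2\|\av_i\|_2^2$ is a nonnegatively weighted sum of independent $\chi^2$ column norms, there is no cancellation at all: Lemma~\ref{lem:lowedbound:AD} gives two-sided concentration around $m\|\dv\|_2^2$ \emph{simultaneously for every diagonal $D$} from a union bound over only the $n$ columns, which is precisely where the $n\ex^{-0.09m}$ and $n\ex^{-0.84m}$ terms in the probability come from. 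So the step you flag as requiring a delicate exploitation of near-orthogonality of the columns is, in the paper's decomposition, an elementary $\chi^2$ tail bound; the price is the stack of $\sigma_{\min}/\sigma_{\max}$ factors that produces the large powers of $\alpha$ and $\gamma$ in $\rho_1,\rho_2$. To complete your proof you should either adopt this factorization or actually prove your version of the curvature lemma (which would require a separate decoupling/concentration argument for the off-diagonal chaos, union-bounded over codeword pairs, and is not obviously cheaper). A secondary bookkeeping discrepancy: in the paper the $2\ex^{-m/2}$ budget is spent on the singular-value event for $A$ (Lemma~\ref{lem:singvalues}), not on a separate non-uniform fluctuation bound at $\xvt_o$ as in your accounting; as written your allocation leaves no probability budget for the extreme-singular-value event you invoke in the curvature step.
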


We would like to make a few remarks about this result.

\begin{remark}
Unlike the classical compressed sensing problem, here the exact recovery is not possible even though there is no additive noise in the measurements. This is due to two issues: 
\begin{enumerate}
\item The existence of multiplicative noise. Given the multiplicative noise that is present in the system, it should be clear that the exact recovery is not possible. 

\item The distortion in our compression algorithm: Given that our search space is only on the codewords, we should not expect to be able to get to the exact solution, because the exact solution may not be even a codeword. We expect this effect to diminish as $\delta \rightarrow 0$. As is clear the term $\rho_2 x_{\max}^2\delta \rightarrow 0$ as $\delta \rightarrow 0$. However, note that as we let $\delta \rightarrow 0$, we most probably are letting $r(\delta)$ grow to infinity, which in turn will blow up the first term in the MSE, i.e.  $\rho_1 \sqrt{(1+\epsilon) nr  \over m}$. Hence, the best choice of $\delta$ depends on the rate-distortion function. In practice, one may use a cross-validation technique for finding the best choice of $\delta$. 

\end{enumerate}
\end{remark}

\begin{remark}
 The condition $m<n/4$ does not allow us to let $m$ go to infinity without bound in our theorem. Note that as we discussed in Section \ref{sec:ML-recovery}, in order to derive the optimization problem \eqref{eq:min-f-main} we assumed that $m<n$. Hence, it is expected that solving  \eqref{eq:min-f-main} does not generate any reliable result for $m >n$. Furthermore, as $m$ gets close to $n$, the matrices $AX^2A^T$ start to have eigenvalues close to zero for many signals $X^2$. This seems to have some adverse algorithmic and statistical effect on the recovery of $\xv_o$. While in practice, we would always like to employ these systems in highly underdetermined regime, still a better understanding of the problem when $n/4<m<n$ will shed more light on the landscape and shape of the likelihood function in multiplicative noise systems. 
\end{remark}

Before we proceed to the discussion of the algorithmic issues of solving \eqref{eq:min-f-main}, we would like to clarify the statement of this theorem through a well-known example. In this example we assume that the family of compression codes on $\Qc$ satisfy $r(\delta) \leq \frac{k}{n} \log \frac{1}{\delta} + \frac{k}{n} \log (n)$. For instance, it is straightforward to construct such a family compression codes for the class of $k$-sparse signals in unit sphere, or class of piecewise constant signals with $k$ jumps.   

\begin{corollary}
Suppose that the family of compression algorithms satisfy $r(\delta) \leq \frac{k}{n} \log \frac{1}{\delta} + \frac{k}{n} \log n$. For any $\epsilon>0$, if we use the compression code with distortion $\delta= \frac{1}{n}$, then we have (the rest of the notations are the same as the ones introduced in Theorem \ref{thm:main-result})
\begin{align}\label{eq:error_ksparse}
 {1\over n} \| \xv_o- \xvh_o\|_2^2 \leq \rho_1 \sqrt{2(1+\epsilon)k \log n   \over m} + \rho_2 x_{\max}^2\frac{1}{n},
\end{align}
with a probability higher than
\[
1-n\ex^{-0.09m}-n\ex^{-0.84m}-2^{-2\epsilon k \log n +1}-2\ex^{-{m\over 2}}.
\]
\end{corollary}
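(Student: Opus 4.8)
The plan is to obtain the corollary as a direct specialization of Theorem \ref{thm:main-result}: we only need to choose the rate and distortion of the compression code appropriately and substitute. First I would set the target distortion to $\delta = 1/n$ and apply the hypothesis $r(\delta) \leq \frac{k}{n}\log\frac{1}{\delta} + \frac{k}{n}\log n$ at this value, which yields $r(1/n) \leq \frac{2k\log n}{n}$. Hence there exists a compression code with codebook $\Cc_r$ of rate $r$ satisfying $nr \leq 2k\log n$ and worst-case distortion $\delta_r \leq 1/n$. Since enlarging the index range of a code (leaving some indices unused) never increases its distortion, without loss of generality I may take the code feeding the optimization \eqref{eq:min-f-main} to have rate exactly $r = \frac{2k\log n}{n}$ and distortion $\delta \leq 1/n$; this normalization is what makes the probability term in Theorem \ref{thm:main-result} come out in the stated form.

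Next I would simply plug these parameters into Theorem \ref{thm:main-result}. Its two standing hypotheses, $m < n/4$ and positivity of the entries of $\xv_o$, are exactly the assumptions in force here, and the constants $\gamma, \alpha, \rho_1, \rho_2$ are unchanged since they depend only on $m/n$ and $x_{\max}/x_{\min}$, not on $r$ or $\delta$. For the error term, $nr = 2k\log n$ gives $\rho_1\sqrt{(1+\epsilon)nr/m} = \rho_1\sqrt{2(1+\epsilon)k\log n / m}$, and $\delta \leq 1/n$ gives $\rho_2 x_{\max}^2 \delta \leq \rho_2 x_{\max}^2 / n$; summing these reproduces the right-hand side of \eqref{eq:error_ksparse}. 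For the probability bound, the only term affected is $2^{-nr\epsilon + 1}$, which with $nr = 2k\log n$ becomes $2^{-2\epsilon k\log n + 1}$, giving the claimed lower bound $1 - n\ex^{-0.09m} - n\ex^{-0.84m} - 2^{-2\epsilon k\log n + 1} - 2\ex^{-m/2}$.

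There is essentially no obstacle here beyond bookkeeping; the one subtlety worth flagging is the direction of the inequalities in the normalization step. The rate-distortion hypothesis only upper-bounds $r(1/n)$, so a raw application could leave $nr$ strictly below $2k\log n$, which would help the error term but weaken the bound $2^{-nr\epsilon+1}$ in the probability; the padding argument of the first paragraph resolves this by letting us treat $nr$ as exactly $2k\log n$. I would also note that the choice $\delta = 1/n$ is consistent with $m < n/4$ and imposes no constraint on $k$ beyond $k \leq n$, so the statement is non-vacuous for the advertised examples ($k$-sparse vectors on the unit sphere, or piecewise-constant signals with $k$ jumps), for which such a family of codes is standard to construct.
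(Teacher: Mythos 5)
Your proof is correct and follows the same route the paper intends: the corollary is a direct specialization of Theorem~\ref{thm:main-result} obtained by substituting $\delta=1/n$ and $nr\leq 2k\log n$ into the error and probability bounds. Your extra care about the direction of the inequality in $r(\delta)\leq \frac{2k\log n}{n}$ (resolved by padding the codebook so that $nr=2k\log n$ exactly, or equivalently by rerunning the choice of $t_2$ in the theorem's proof with $2k\log n$ in place of $nr$) is a legitimate refinement that the paper glosses over, but it does not change the substance of the argument.
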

Note that given the fact that $m< \frac{n}{4}$ the dominant term in the right hand side of \eqref{eq:error_ksparse} is the first term. The term $\frac{k \log n}{m}$ is the term that appears in the compressed sensing problem with additive noise as well \cite{bickel2009simultaneous}.  Note however that, in the additive noise case, instead of having $\sqrt{k \log n   \over m} $, we have ${k \log n   \over m}$. Whether this is an artifact of our proof technique or a fundamental difficulty of the multiplicative noise problem is an important open problem for our future research.


\section{Recovery algorithms}\label{sec:inference}

\subsection{Roadmap}
In Section \ref{sec:ML-recovery}, we showed that a compression-based ML recovery
method  recovers signal $\xv_o\in\Qc$ from measurements  $\yv=AX\wv_o$, where
$X_o=\diag({\xv_o})=\diag({x_{o,1}},\ldots,{x_{o,n}})$, by solving the
optimization described in \eqref{eq:main-opt-ML}. However, \eqref{eq:main-opt-ML} is a challenging
optimization as the cost function is a high-dimensional non-convex function of
$\xv$ over an exponentially large discrete set. Therefore, obtaining the solution of \eqref{eq:main-opt-ML} is not straightforward.

As an initial step in solving this optimization problem, we propose two different algorithms which attempt to solve \eqref{eq:main-opt-ML}:
(1) a projected gradient descent algorithm, and (2) a gradient-free multi-level optimization algorithm.
We describe the derivations of the algorithms in this section, and the simulation results in section~\ref{sec:sim}.

\subsection{Projected Gradient Descent Algorithm}\label{ssec:pgd}

\subsubsection{General algorithm}
Projected gradient descent (PGD) algorithm and its close relative proximal gradient descent \cite{combettes2005signal} are among the most popular algorithms in the fields of signal processing, machine learning, and optimization. In each step, PGD takes a small step in the direction of the gradient of the cost function (ignoring all the constraints on the solution) and then project the current estimate on the constraint set. These algorithms are known to converge to a global minimizer of convex optimization problems (if the step size is picked according to certain rules). PGD algorithms  have also been applied in practice to non-convex optimization problems. While they have exhibit a good performance for non-convex problems as well, there is still no general theory that can explain the good performance of PGD for a large class of non-convex problems. However, researchers have proved the success of such algorithms in particular instances of non-convex optimization problems \cite{blumensath2009iterative, beygi2019efficient, bakhshizadeh2020using}. In this paper, we derive the PGD algorithm for optimization problem \eqref{eq:main-opt-ML}. We also mention some details about how we implement the algorithm, and evaluate its performance through simulations. We leave the theoretical evaluation of this algorithm for a future research. As we discussed before, we are interested in solving the following optimization problem with PGD:
\begin{align*}
\hat{\xv}=\argmin_{\substack{X=\diag(\xv), \\ \xv\in\Cc_r}}\Big[ \log\det(AX^2A^T)+{1\over \sigma_w^2}\yv^T(AX^2A^T)^{-1}\yv\Big].
\end{align*}
The PGD algorithm has two steps: (i) moving in the direction of the gradient of $ \log\det(AX^2A^T)+{1\over \sigma_w^2}\yv^T(AX^2A^T)^{-1}\yv$, and (ii) projecting onto the constraint set $\Cc_r$. We discuss each step and some other details below.

\begin{itemize} 

\item Gradient calculation: To derive the PGD algorithm, we first need to compute the
gradient of the cost function $f(\Sigma(\xv))= \log\det (AX^2A^T) +{1\over \sigma_w^2}
\yv^T (AX^2A^T)^{-1}\yv$ with respect to $\xv$. Note that because $X$ is a diagonal  matrix, $AX^2A^T=\sum_{i=1}^nx_i^2
\av_i\av_i^T$, where $\av_i$ is the $i$-th column of matrix $A$.  Define
$\uv=\xv^2$, i.e., $u_i=x_i^2$. We first compute the partial derivative of
$\underline{f}(\uv)=\log\det (\sum_{i=1}^nu_i \av_i\av_i^T)+{1\over \sigma_w^2} \yv^T (\sum_{i=1}^nu_i \av_i\av_i^T)^{-1}\yv = f(\Sigma(\xv))$ with respect to $u_i$.
Let $\ev_i$ denote the unit vector in direction $i$.
Let $B = \sum_{j=1}^nu_j \av_j\av_j^T$.
Then,
\begin{align}
\underline{f}(\uv+\delta u_i \ev_i)-\underline{f}(\uv)&=\log\det (B +\delta_i \av_i\av_i^T )+{1\over \sigma_w^2} \yv^T (B +\delta_i \av_i\av_i^T )^{-1}\yv -\log\det B +{1\over \sigma_w^2} \yv^T B^{-1}\yv \nonumber\\
&=\log\det (I_m +\delta_i B^{-{1\over 2}} \av_i\av_i^TB^{-{1\over 2}} )+{1\over \sigma_w^2} \yv^T ((B +\delta_i \av_i\av_i^T )^{-1}-B^{-1})\yv.
\end{align}
Using the Woodbury matrix identity,
\[
(B +\delta_i \av_i\av_i^T )^{-1}=B^{-1}- {\delta_i B^{-1}\av_i\av_i^TB^{-1} \over 1 +\delta_i\av_i^T B^{-1}\av_j}.
\]
Also, given that all but one of the eigenvalues of  $I_m +\delta_i B^{-{1\over 2}} \av_i\av_i^TB^{-{1\over 2}} $ are equal to one, we have $\log\det (I_m +\delta_i B^{-{1\over 2}} \av_i\av_i^TB^{-{1\over 2}} )=\log(1+\delta_i \av_i^TB^{-1}\av_i)$. Using these identities, it follows that ${\partial \underline{f}(\uv) \over \partial u_i}= \av_i^TB^{-1}\av_i-(\av_i^TB^{-1} \yv)^2$, and 
\begin{equation}\label{eq:gradient}
{\partial f (\Sigma(\xv)) \over \partial x_i}= 2x_i(\av_i^TB^{-1}\av_i-(\av_i^TB^{-1} \yv)^2),
\end{equation}
where $B=AX^2A^T$. Equation \eqref{eq:gradient} gives us the gradients that are necessary for the PGD algorithm. 
 
\item Projection, proper step-size, and initialization:  As discussed before, PGD algorithm requires the projection onto $\Cc_r$ defined as
\[
    \pi_{\Cc_r}(\uv) = \argmin_{\uv' \in \Cc} \norm{\uv' - \uv}.
\]
 Note that, because $\Cc_r$ is often a very large non-convex set, this projection can be computationally demanding. While in some cases, such as the one we will mention in the next section, the projection can be calculated efficiently and accurately using for instance dynamic programming, in many other examples this is not the case. One successful approximation for $\pi_{\Cc_r}(u)$ that has shown promising results in other applications (e.g. \cite{beygi2019efficient, bakhshizadeh2020using}) is 
 \[
 \hat{\pi}_{\Cc_r}(\uv) = \Dc_r(\Ec_r(\uv))
 \]
 where $\Ec_r$ and $\Dc_r$ are the encoder and decoder of the compression algorithms respectively. There are two reasons for using this approximation for the projection operator: (i) All the state-of-the-art compression algorithms have computationally efficient decoders and encoders. Hence, the approximate projection will be fast as well. (ii) All state-of-the-art compression algorithms try to make sure that they are projecting each data point onto the closest codeword or a codeword that is in the vicinity of the closest codeword. Hence, this approximation is expected to be quite accurate.

Based on our discussion so far, our PGD algorithm proceeds according to the following iteration:
\begin{equation}\label{eq:approximate_proj}
\xv_t= \pi_{\Cc_r}(\xv_{t - 1} - \mu_{t} \gv_t),
\end{equation}
where $\xv_t$ is the estimate of $\xv_o$ at iteration $t$, $\gv_t$ is the gradient of $ \log\det(AX^2A^T)+{1\over \sigma_w^2}\yv^T(AX^2A^T)^{-1}\yv$ based on \eqref{eq:gradient}, and $\mu_t$ is the step-size at iteration $t$. There are two remaining ingredients in our algorithm: (i) the choice of the step-size $\mu_t$, and (ii) the choice of initialization. To set the step size we use the line search to find a value of $\mu_t$ that  makes $f(\Sigma(\pi_\Cc(\xv_{t - 1} - \mu \gv_t)))$ smaller than $f(\Sigma (\xv_{t - 1}))$. Additionally, we initialize the algorithm with $\xv_0$ being a constant vector. Based on all our discussions, the PGD-based algorithm operates as described in Algorithm~\ref{alg:1}. As discussed before, whenever possible we will use the exact choice of the projection function. Otherwise, we will use the approximation presented in \eqref{eq:approximate_proj}.


\begin{algorithm}[H]\label{alg:1}
\SetAlgoLined
\KwResult{$\xvh_T$}
 initialize $\xv_0\in\mathds{R}^n$\\
  $X_0=\diag(\xv_0)$,  \\
  $B_0=AX_0^2A^T$\;
 \For{$t=1,\ldots,T$}{
  \For{$i=1,\ldots,n$}{
   $s_{t,i} =x_{t-1,i} - \mu x_{t-1,i}(\av_i^TB_{t-1}^{-1}\av_i-(\av_i^TB_{t-1}^{-1} \yv)^2)$
  }
  $\xv_t=\pi_{\Cc_r}(\sv_t)$\\
  $X_t=\diag(\xv_t)$\\
  $B_t=AX_t^2A^T$
   }
 \caption{PGD-based recovery from under-determined measurements corrupted by multiplicative  noise}\label{alg1}
\end{algorithm}

\end{itemize}

\subsubsection{Piecewise constant functions}

To provide a concrete example for the discussions of the previous section, we mention a popular class of functions in imaging systems, i.e. the class of piecewise constant functions.  Refer to \cite{tibshirani2005sparsity} for applications of this model beyond imaging. Generalizations of our discussion to the class of piecewise polynomial functions are straightforward. However, to keep the notation simple, we focus on piecewise constant functions here. 

 Let $\Qc^{J,n}(x_{\min},x_{\max})$ denote the set of piecewise constant signals  in $\mathds{R}^n$ with the maximum of $J$ jumps and values bounded between $x_{\min}$ and $x_{\max}$, where $0<x_{\min} \leq x_{\max}$. To apply Alg.~\ref{alg:1}, we need to have a compression code for this class of signals.  In our simulations, we use a compression code that operates as follows.
\begin{itemize}
\item Encoder $\Ec_n$: Consider $\xv\in \Qc^{J,n}(x_{\min},x_{\max})$. Let $i_1,\ldots,i_j$ denote the location of the jumps in $\xv$. That is, $x_{i_k}\neq x_{i_k+1}$. (Note that $j\leq J$.) The encoder describes the number of jumps and the jump locations (using $(j+1)\lceil \log_2 J\rceil$ bits). Then, it describes the $b$-bit quantized versions of the  $j+1$ values  corresponding  to  the $j+1$ constant intervals (overall using $(j+1)(\lceil \log_2 ( x_{\max}-x_{\min}) \rceil +b)$ bits).   
\item Decoder $\Dc_n$: The decoder receives the number of jumps, locations of the jumps and the $b$-bit quantized values and reconstructs the signal accordingly. 
\end{itemize}
Let $\Cc_r$ denote the codebook corresponding to this compression code. Note that both the rate and the distortion depend on $J$ and $b$. For this class of functions (and also the class of piecewise polynomial functions), and for the codewords we have constructed, we can implement the exact projection function $ \pi_{\Cc_r}(\uv)$ efficiently using dynamic programming. 
Consider $\cv\in\Cc_r$, with $j$ jumps at $i_1<\ldots<i_j$ and values $a_1,\ldots,a_{j+1}\in\Xc_m$, where
\[
\Xc_b\triangleq \{[x]_b:\; x\in(x_{\min},x_{\max})\}.
\]
Let $i_0\triangleq 0$ and $i_{j+1}\triangleq n$. Then,
\begin{align}
\|\xv-\cv\|^2=\sum_{k=0}^{j}\sum_{i=i_k+1}^{i_{k+1}}(a_{k+1}-x_i)^2.\label{eq:cost-pw-const}
\end{align}
Fixing the jump locations, it is straightforward to find the minimizer of \eqref{eq:cost-pw-const}, as  
\[
\min_{(a_1,\ldots,a_{j+1})\in\Xc_b^{j+1}}\sum_{k=0}^{j}\sum_{i=i_k+1}^{i_{k+1}}(a_{k+1}-x_i)^2=\sum_{k=0}^{j}\min_{a_{k+1}\in\Xc_b}\sum_{i=i_k+1}^{i_{k+1}}(a_{k+1}-x_i)^2. 
\]
Therefore, the complexity of minimizing \eqref{eq:cost-pw-const} lies in finding jumps' locations. But, the expression of \eqref{eq:cost-pw-const} suggests that we can employ the Viterbi (dynamic programming) algorithm \cite{Viterbi:67} to find the  jumps' locations.  Consider a Trellis diagram with $J+1$ layers and $n$ nodes at each layer.  In this graph, layer $k$ corresponds to the $k$-th jump and node $i$ in layer $k$ correspond to having the $k$-th jump at $i$. Using this construction, node $i$ in layer $k$ only connects to nodes $i+1,\ldots,n$ at layer $k+1$. The weight assigned to the edge connection node $i_1$ in layer $k$ to node $i_2$ in layer $k+1$ is equal to 
\[
\min_{a\in\Xc_b} \sum_{i=i_1+1}^{i_2}(a-x_i)^2.
\]
Applying the Viterbi algorithm on the described Trellis diagram yields the solution of \eqref{eq:cost-pw-const}.

\subsection{Multilevel formulation}\label{ssec:multilevel}

In addition to the projected gradient descent algorithm proposed in the previous section, in some cases, such as the piecewise polynomial functions, we can reformulate our problem as a multilevel optimization problem. Namely, we split the optimization into a discrete component (optimization of the structure),
and a continuous (or near continuous) component (optimization of the values given the structure). To avoid introducing new notations, we describe this method on the concrete example of piecewise constant functions that we described in the previous section.  

Given integers $1 = d_1, d_2, \dotsc, d_{k + 1} = n + 1$ and positive real values $\theta_1, \dotsc, \theta_{k}$,
we may define the following piecewise-constant signal $\xv(\thetav, \dv)$:
\begin{equation*}
  \xv(\thetav, \dv)_i = \theta_l \quad \text{ if } d_{l} \leq i < d_{l + 1}.
\end{equation*}
We may thus view the recovery problem as:
\begin{equation*}
  \argmin_{\dv \in \NN^{k + 1}} \min_{\thetav \in \RR^{k}} f(\Sigma (\xv(\thetav, \dv))),\footnote{Note that in this formulation we have assumed we are doing the compression at its highest resolution, i.e. $b = \infty$ and $J= n$. It is straightforward to generalize this method to other cases as well. }
\end{equation*}
Now, note that the inner optimization problem, which we denote as:
\begin{equation*}
  h(\dv) = \min_{\thetav \in \RR^k} f(\Sigma(\xv(\thetav, \dv)))
\end{equation*}
can be solved using traditional gradient-descent or quasi-Newton type solvers, such as L-BFGS. Our simulation results show that the non-convexity of the cost does
not present any major challenges to these methods. Hence, suppose that once $\dv$ is given the calculation of the optimal solution is straightforward. We then attempt to optimize $h(\dv)$ directly as a function of $\dv$.
However, we note here that $h$ is a multivariate function of discrete quantities, and gradient information
(or similar) is not easily computable for $h$. Instead, we propose to use a gradient-free method to optimize $h(\dv)$.
We leverage recent advances in hyper-parameter optimization methods, and make use of \emph{optuna} \cite{optuna},
a general purpose gradient-free optimizer, to optimize $h$. Note that using alternating minimization on $\dv$ and $\thetav$, we can hope to obtain an approximate minimizer of the $f(\Sigma (\xv(\thetav, \dv)))$.



\section{Simulation results}\label{sec:sim}

As a preliminary investigation, we analyze the performance of the proposed methods on a simulated example.
To evaluate the performance of our algorithms, we consider the problem of recovering a piecewise-constant signal from its undersampled measurements.  We study the performance of the
following algorithms that are based on the two algorithms we discussed in the last section:
\begin{enumerate}
  \item A vanilla implementation of Algorithm \ref{alg:1}, where the step size $\mu$ is selected by line search at every iteration.
  We refer to this implementation as \emph{pgd}.
  \item We noticed a dependence of the optimization quality on the choice of initial value $x_0$. We consider an algorithm which performs
  Algorithm \ref{alg:1} for a set of initializations (namely all constant signals with a given magnitude), and selects the output among those which maximizes
  the likelihood of the data. We refer to this implementation as \emph{pgd + init}.
  \item We also consider the multi-level formulation described in section~\ref{ssec:multilevel}. We refer to this
  method as \emph{multilevel}.
  \item Additionally, we present a variant of this multi-level approach, where approximate break locations are first estimated using a PGD method, then neighborhoods
  of those breaks are searched using a gradient-free optimizer. We refer to this implementation as \emph{pgd + multilevel}.
\end{enumerate}
\begin{figure}
\begin{center}
\includegraphics[width=5cm]{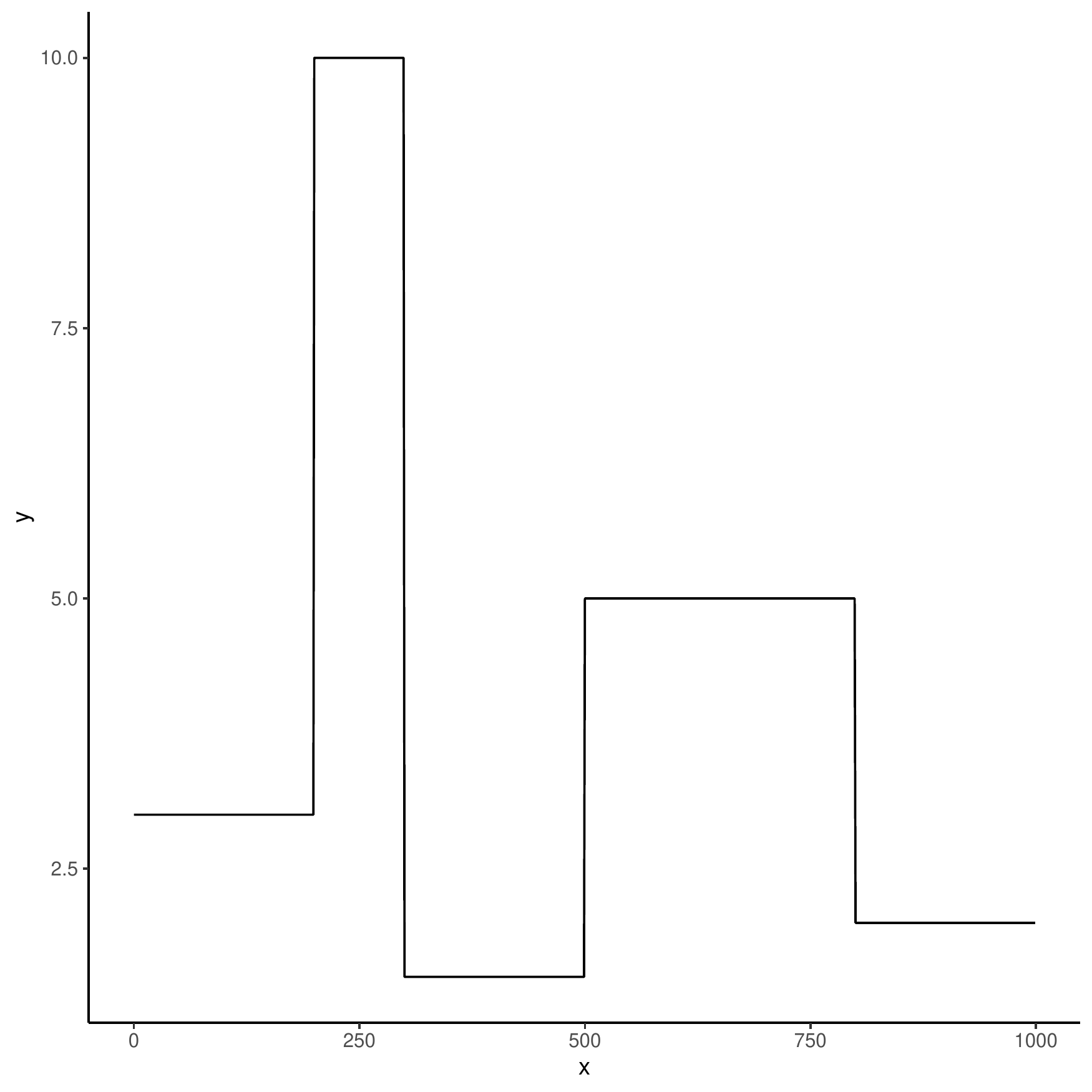}
\caption{The piecewise constant signal used in the simulations. \label{fig:truesignal_sim}}
\end{center}
\end{figure}

In our simulations, we consider a signal with $5$ pieces in dimension $n =1000$. This signal is shown in Figure \ref{fig:truesignal_sim}. We evaluate the performance of our algorithms for $m \in \{200, 250, 300, 350, 400, 450, 500 \}$ measurements. For each $m$, we generate 50 random measurement matrices and noise vectors and measure the reconstruction error of each of the four algorithms we mentioned above. To measure the amount of reconstruction error of vector $\hat{\xv}$ we use the PSNR in decibels defined as
\[
10 \log_{10} \left(\frac{\|\xv\|^2_{\infty}}{\|\hat{\xv} -\xv\|_2^2} \right)
\]

A few instances of reconstructions are shown in Figures \ref{fig:truesignal_vs_recon250} and \ref{fig:truesignal_vs_recon400}. A summary of our simulation results is presented in Figure~\ref{fig:method-performance}. We can draw the following conclusions based on these simulations:

\begin{figure}
\begin{center}
\includegraphics[width=10cm]{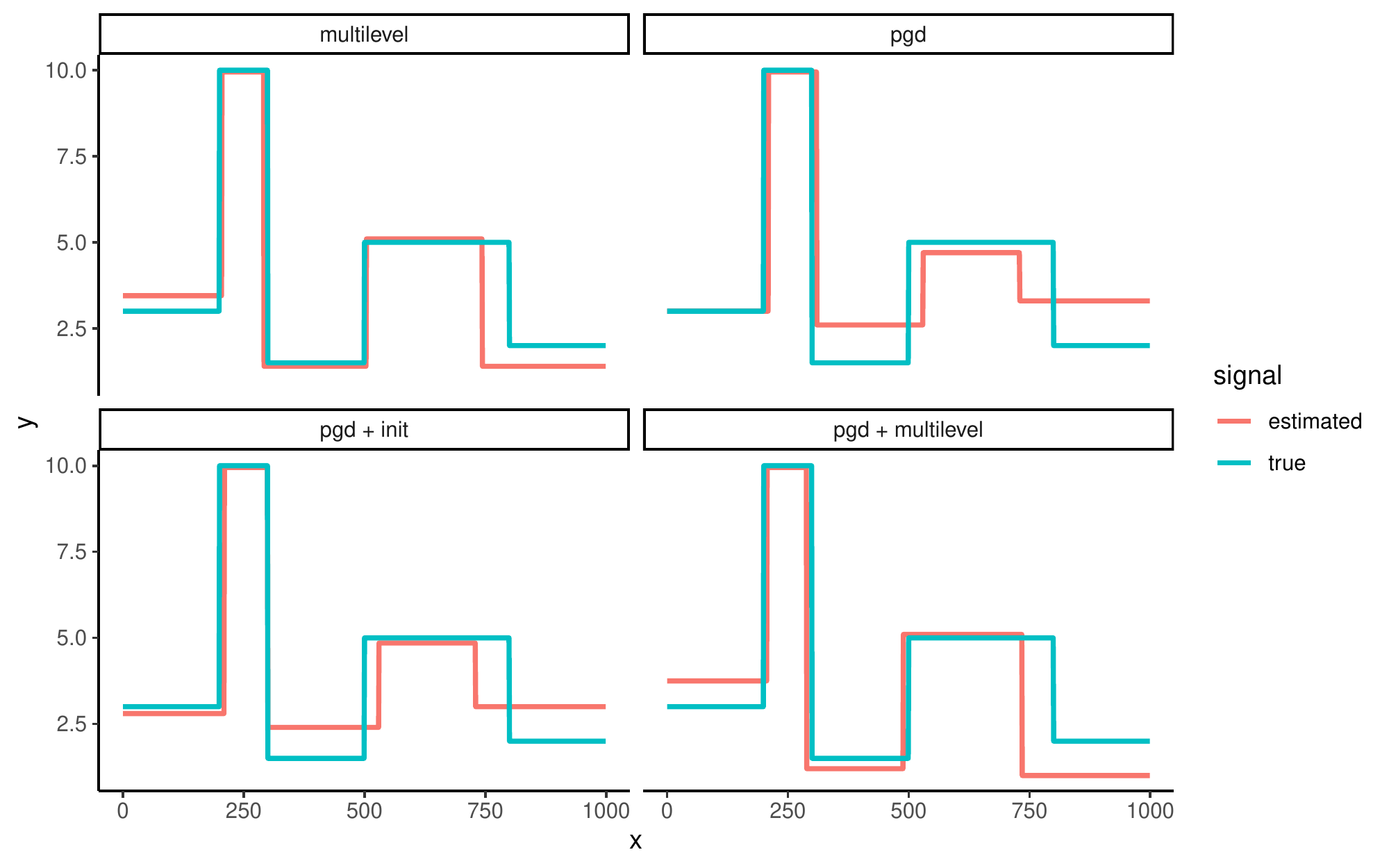}
\caption{Signal $\xv_o$ shown in Figure \ref{fig:truesignal_sim} and its reconstruction from its undersampled measurements with multiplicative noise using the four algorithms described in this section. The number of measurements is set to $m =250$.\label{fig:truesignal_vs_recon250}}
\end{center}
\end{figure}

\begin{figure}
\begin{center}
\includegraphics[width=10cm]{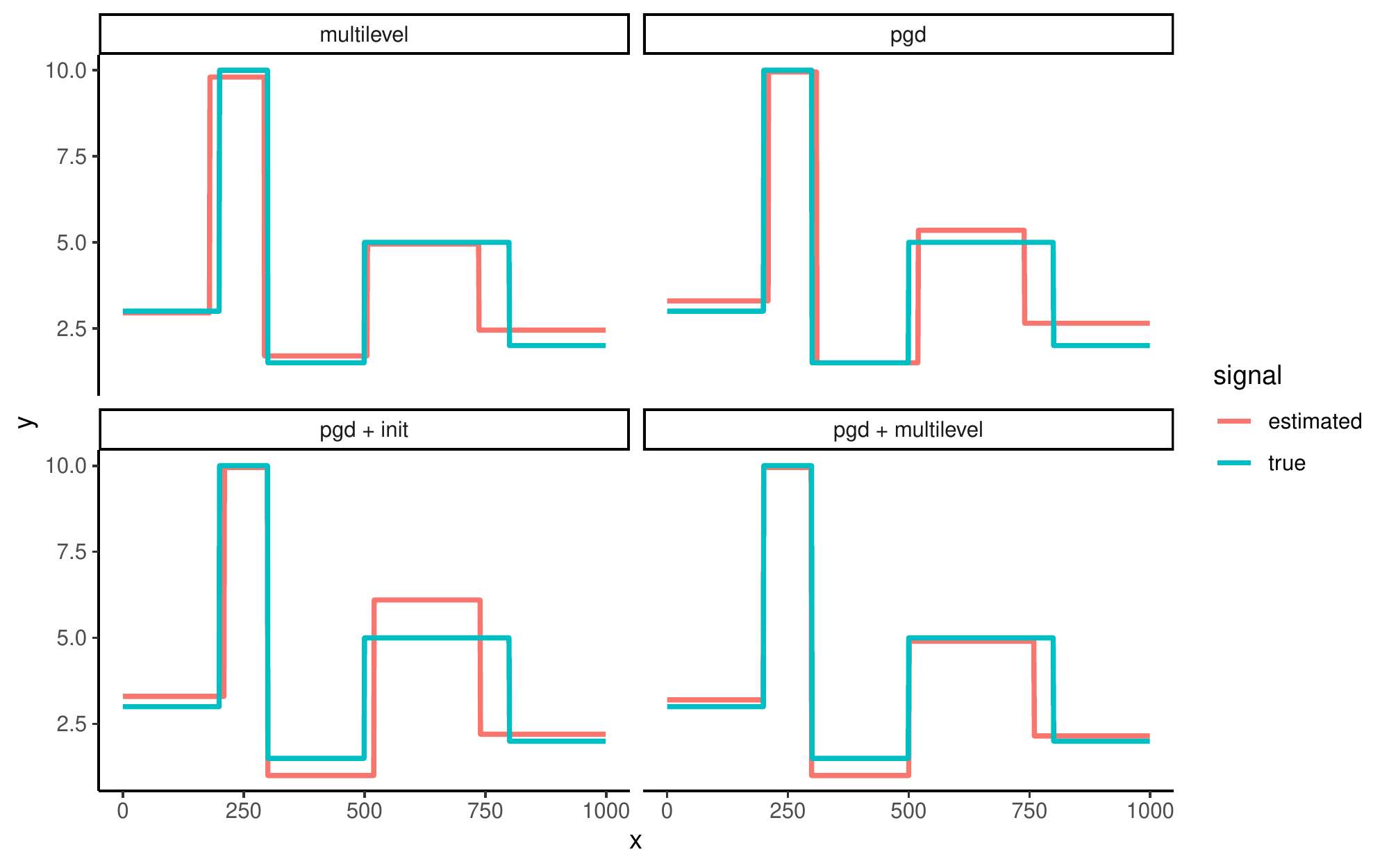}
\caption{Signal $\xv_o$ shown in Figure \ref{fig:truesignal_sim} and its reconstruction from its undersampled measurements with multiplicative noise using the four algorithms described in this section. The number of measurements is set to $m =400$. \label{fig:truesignal_vs_recon400}}
\end{center}
\end{figure}

\begin{enumerate}
\item As is clear from Figure \ref{fig:method-performance}, pgd+multilevel algorithm overall offers a better performance than the other algorithms. However, we should note that multilevel algorithms are often more computationally demanding (see Table~\ref{table:method-performance}), and face scalability issues to larger structures (e.g. larger number of constant pieces), or more complex structures (e.g. those that may be found in images). Finding algorithms that are more efficient than pgd-based methods and are still scalable for more sophisticated signals and structures is an important direction for our future research.

\item As can be seen in Figure ~\ref{fig:method-performance}(b), in general the true signal has higher log-likelihood than the solutions our algorithms are converging to. This is an indication of the fact that our algorithms have not been able to find the global minimizer of the negative log-likelihood. This again poses an open algorithmic problem. Can we find algorithms that achieve better solutions? Again this is an important direction for our future research.  

\end{enumerate}

\begin{figure}
  \centering
  \subcaptionbox{Statistical performance}{
    \includegraphics[width=0.45\textwidth]{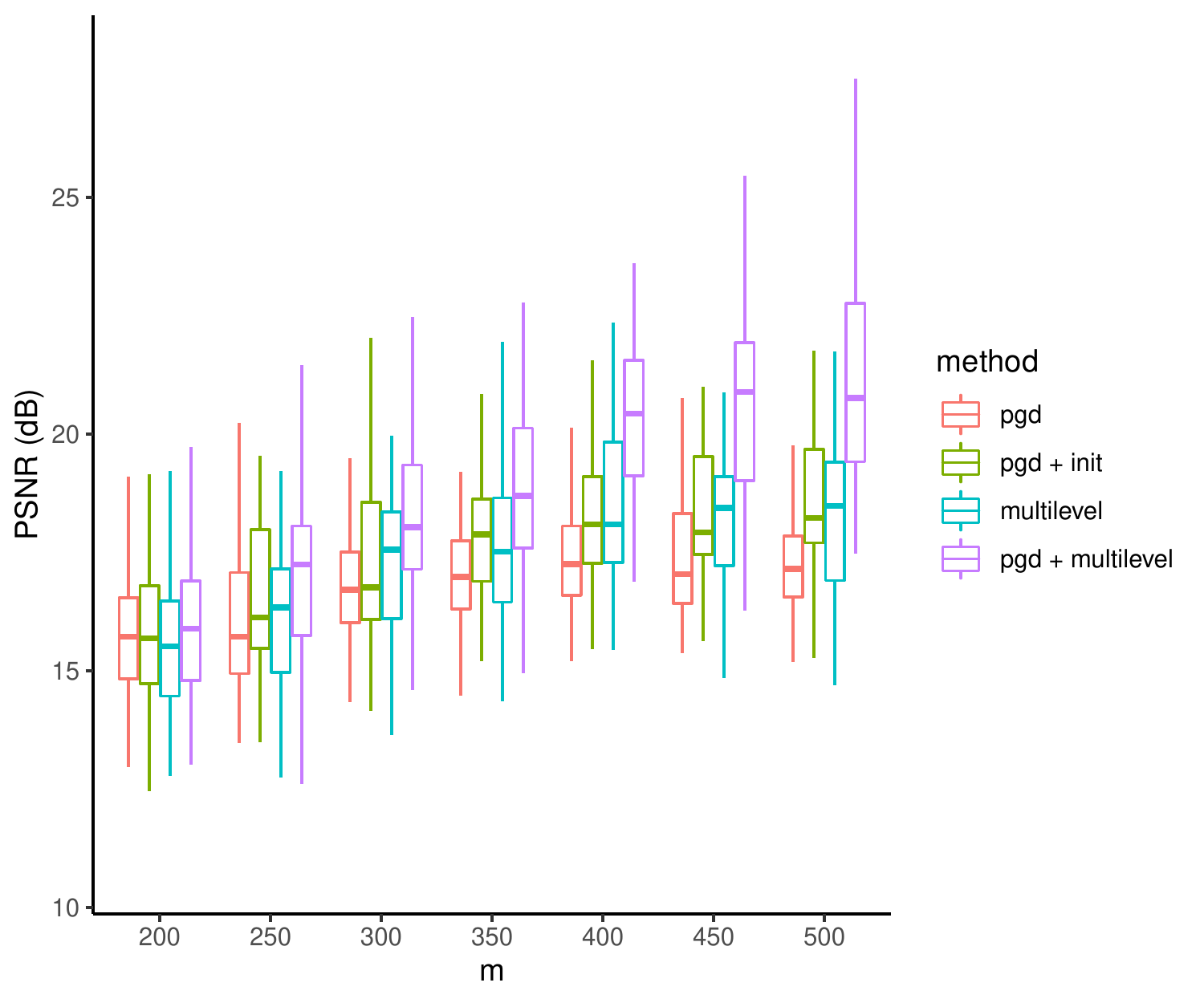}
  }
  \subcaptionbox{Optimization performance}{
    \includegraphics[width=0.45\textwidth]{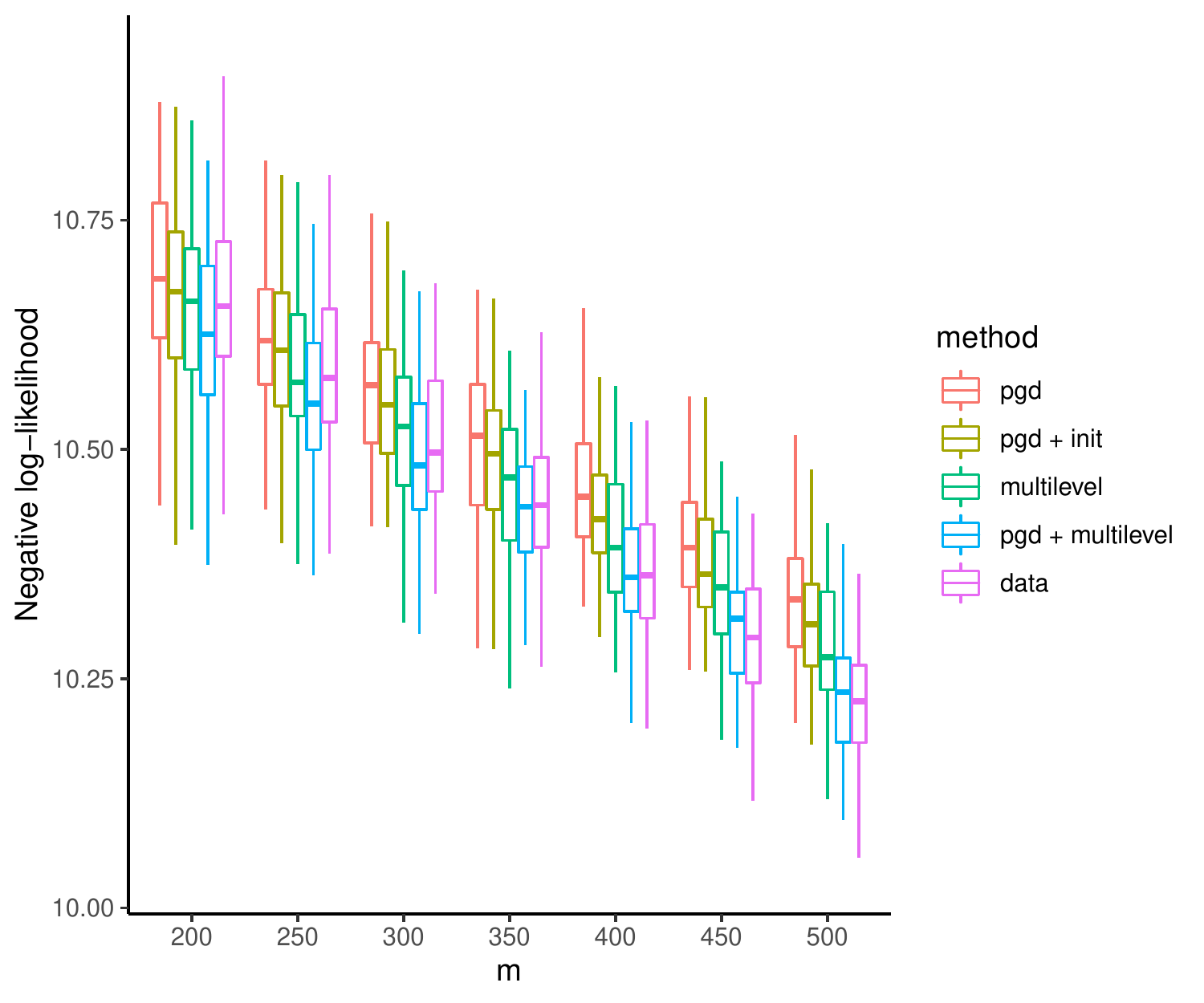}
  }

  \caption{Summary of our simulation results on a piecewise constant signal with $5$ pieces. Figure (a) exhibits the PSNR of the reconstruction of the four algorithms based on $50$ simulations as a function of the number of measurements $m$. In each simulation a fresh measurement matrix and random noise vector are drawn. Note that the bars represent the ninety percent confidence intervals. Figure (b) shows the negative log-likelihood of the solution of our algorithms. For comparison purposes we have also included the negative log-likelihood of $\xv_o$ the true signal. The label for the negative log-likelihood of the true signal is {\em data} in Figure (b). \label{fig:method-performance}}
\end{figure}

\begin{table}
  \centering
  \begin{tabular}{llrr}
    \toprule
    $m$ & Method & Time (s) & Evaluations \\
    \midrule
    200 & PGD          &  0.6  \makebox[\widthof{(00.0)}][r]{(0.2)}  & 89   \makebox[\widthof{(000)}][r]{(25)}  \\
    & PGD + init       &  8.8  \makebox[\widthof{(0.00)}][r]{(1.7)}  & 1325 \makebox[\widthof{(000)}][r]{(223)} \\
    & Multilevel       & 33.6  \makebox[\widthof{(0.00)}][r]{(1.2)}  & 4492 \makebox[\widthof{(000)}][r]{(55)}  \\
    & PGD + multilevel & 33.5  \makebox[\widthof{(0.00)}][r]{(1.6)}  & 4454 \makebox[\widthof{(000)}][r]{(77)}  \\
    400 & PGD          &  1.1  \makebox[\widthof{(0.00)}][r]{(0.3)}  & 65   \makebox[\widthof{(000)}][r]{(16)}  \\
    & PGD + init       & 20.6  \makebox[\widthof{(0.00)}][r]{(3.0)}  & 1173 \makebox[\widthof{(000)}][r]{(161)} \\
    & Multilevel       & 99.4  \makebox[\widthof{(0.00)}][r]{(5.7)}  & 4332 \makebox[\widthof{(000)}][r]{(35)}  \\
    & PGD + multilevel & 94.3  \makebox[\widthof{(0.00)}][r]{(15.7)} & 4302 \makebox[\widthof{(000)}][r]{(36)}  \\
    \bottomrule
  \end{tabular}

  \caption{Time and number of likelihood or gradient evaluations performed by the proposed methods: average and standard deviations are calculated based on $50$ Monte Carlo simulations. Note that the signal is piecewise constant with $5$ pieces in dimension $n=1000$. The number of measurements are mentioned in the first column. \label{table:method-performance}}
\end{table}

\section{Proofs}\label{sec:proof}

\subsection{Preliminaries}
Before stating the proofs, we present some lemmas, some new and some known, that are going to be used in the proofs on the main results.

\begin{lemma}\label{lem:singvalues}\cite{RudelsonVershinin2010}
Let the elements of an $m \times n$ (m<n) matrix $A$ be drawn independently from $\Nc(0,1)$. Then for any $t>0$ we have
\[
\P(\sqrt{n}-\sqrt{m}- t \leq \sigma_{\min} (A) \leq \sigma_{\max}(A) \leq \sqrt{n}+\sqrt{m}+ t) \geq 1-  2 {\rm e}^{-\frac{t^2}{2}}.
\] 
\end{lemma}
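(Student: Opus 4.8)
\medskip
\noindent\emph{Proof plan.} This is a classical non-asymptotic estimate for the extreme singular values of a Gaussian matrix (Davidson--Szarek; see also \cite{RudelsonVershinin2010}), so one option is simply to invoke that reference. To reprove it from scratch, I would combine two standard ingredients: Gaussian Lipschitz concentration, and Gaussian comparison inequalities for the relevant expectations.

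First I would note that, regarding $A$ as a vector in $\mathds{R}^{mn}$ with its Euclidean norm (which equals $\|A\|_{\rm HS}$), both $A\mapsto\sigma_{\max}(A)=\|A\|$ and $A\mapsto\sigma_{\min}(A)$ are $1$-Lipschitz: the former by the triangle inequality for the operator norm together with $\|M\|\le\|M\|_{\rm HS}$, the latter by Weyl's perturbation bound $|\sigma_{\min}(A)-\sigma_{\min}(A')|\le\|A-A'\|_{\rm HS}$. Since the entries of $A$ are i.i.d.\ $\Nc(0,1)$, the Gaussian concentration (Borell--TIS) inequality for $1$-Lipschitz functions then gives, for every $t>0$,
\[
\P\bigl(\sigma_{\max}(A)\ge\E[\sigma_{\max}(A)]+t\bigr)\le{\rm e}^{-t^2/2},\qquad
\P\bigl(\sigma_{\min}(A)\le\E[\sigma_{\min}(A)]-t\bigr)\le{\rm e}^{-t^2/2}.
\]
It then remains to bound the two means by $\E[\sigma_{\max}(A)]\le\sqrt n+\sqrt m$ and $\E[\sigma_{\min}(A)]\ge\sqrt n-\sqrt m$; plugging these in and taking a union bound over the two one-sided events yields the stated failure probability $2{\rm e}^{-t^2/2}$.

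For the mean bounds I would write $\sigma_{\max}(A)=\max_{u\in S^{m-1},\,v\in S^{n-1}}u^TAv$ and $\sigma_{\min}(A)=\min_{u\in S^{m-1}}\max_{v\in S^{n-1}}u^TAv$, and compare the Gaussian process $X_{u,v}=u^TAv$ with the auxiliary process $Y_{u,v}=g^Tu+h^Tv$, where $g\sim\Nc(0,I_m)$ and $h\sim\Nc(0,I_n)$ are independent. For the maximum, a short increment computation --- $\E(X_{u,v}-X_{u',v'})^2=2-2(u^Tu')(v^Tv')\le 4-2u^Tu'-2v^Tv'=\E(Y_{u,v}-Y_{u',v'})^2$, which is just $2(1-u^Tu')(1-v^Tv')\ge0$ --- lets Sudakov--Fernique conclude $\E[\sigma_{\max}(A)]\le\E\|g\|+\E\|h\|\le\sqrt m+\sqrt n$ in one line. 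For $\sigma_{\min}(A)$, which is a min-max rather than a plain maximum, the matching lower bound $\E[\sigma_{\min}(A)]\ge\sqrt n-\sqrt m$ follows from Gordon's Gaussian min-max comparison inequality.

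The main obstacle is precisely this last point. The maximal singular value is handled almost trivially by Sudakov--Fernique, but the minimal singular value is a min-max, so one must invoke Gordon's inequality and set up the auxiliary process and its hypotheses carefully so that the clean constant $\sqrt n-\sqrt m$ emerges; a crude estimate such as $\E\|h\|-\E\|g\|$ does not quite suffice, since $\E\|h\|<\sqrt n$. Everything else --- Lipschitz continuity of the singular values, Gaussian concentration, and $\E\|g\|\le\sqrt m$ --- is routine.
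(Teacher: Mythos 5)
The paper does not prove this lemma at all --- it is quoted directly from the cited reference \cite{RudelsonVershinin2010}, where the proof is exactly the Davidson--Szarek/Gordon argument you outline (Lipschitz concentration for both extreme singular values plus Sudakov--Fernique for the mean of $\sigma_{\max}$ and Gordon's min-max comparison for the mean of $\sigma_{\min}$), so your plan matches the source and is sound. The one loose end you correctly flag --- upgrading $\E\|h\|-\E\|g\|$ to $\sqrt n-\sqrt m$ --- is closed by the standard fact that $\sqrt{k}-\E\|h_k\|$ is nonnegative and nonincreasing in $k$, so that $\E\|h_n\|-\E\|h_m\|\ge\sqrt n-\sqrt m$ for $n\ge m$.
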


\begin{lemma}[Concentration of $\chi^2$ \cite{JalaliM:14-MEP-IT}]\label{lem:conc:chisq}
Let $Z_1, Z_2, \ldots, Z_n$ denote a sequence of independent $\Nc(0,1)$ random variables. Then, for any $t\in(0,1)$, we have
\[
\P (\sum_{i=1}^n Z_i^2 \leq m (1-t)) =  {\rm e}^{\frac{m}{2} (t + \log (1- t)) }. 
\]
Also, for any $t>0$,
\[
\P (\sum_{i=1}^n Z_i^2 \geq m (1+t)) =  {\rm e}^{-\frac{m}{2} (t -\log (1+ t)) }. 
\]
\end{lemma}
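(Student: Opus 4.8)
The plan is to establish both tail bounds by the Chernoff (exponential Markov) method applied to the random variable $S=\sum_{i=1}^m Z_i^2$, which is chi-squared with $m$ degrees of freedom. The one ingredient needed is the moment generating function of a single squared standard Gaussian: completing the square in the Gaussian integral gives $\E[e^{\lambda Z_i^2}]=(1-2\lambda)^{-1/2}$ for $\lambda<1/2$, hence by independence $\E[e^{\lambda S}]=(1-2\lambda)^{-m/2}$, and likewise $\E[e^{-\lambda S}]=(1+2\lambda)^{-m/2}$ for $\lambda>-1/2$. Everything else is a one-variable optimization of the resulting exponent.

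For the lower tail I would fix $\lambda>0$ and write
\[
\P\big(S \le m(1-t)\big)=\P\big(e^{-\lambda S}\ge e^{-\lambda m(1-t)}\big)\le e^{\lambda m(1-t)}\,\E[e^{-\lambda S}]=\exp\Big(\lambda m(1-t)-\frac{m}{2}\log(1+2\lambda)\Big).
\]
Optimizing the exponent over $\lambda>0$: setting its derivative to zero yields $1+2\lambda=\frac{1}{1-t}$, i.e. $\lambda=\frac{t}{2(1-t)}$, which is a legitimate (positive) choice precisely because $t\in(0,1)$. Substituting this value back collapses the exponent to $\frac{m}{2}\big(t+\log(1-t)\big)$, giving the first claimed bound.

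For the upper tail the argument is symmetric: for $\lambda\in(0,1/2)$,
\[
\P\big(S \ge m(1+t)\big)\le e^{-\lambda m(1+t)}\,\E[e^{\lambda S}]=\exp\Big(-\lambda m(1+t)-\frac{m}{2}\log(1-2\lambda)\Big),
\]
and minimizing over $\lambda$ gives $1-2\lambda=\frac{1}{1+t}$, i.e. $\lambda=\frac{t}{2(1+t)}$, which lies in $(0,1/2)$ for every $t>0$. Plugging in produces the exponent $-\frac{m}{2}\big(t-\log(1+t)\big)$, as claimed.

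There is no genuine obstacle here; the only two points that require care are (i) the evaluation of the squared-Gaussian MGF, and (ii) verifying that each optimizing $\lambda$ lies inside the interval where the corresponding MGF is finite — which holds exactly because of the hypotheses $t\in(0,1)$ and $t>0$, respectively. I note in passing that the statement is really an inequality ($\le$) rather than an equality, since the Chernoff bound is one-sided, and that the sum should be taken over $i=1,\dots,m$ to match the right-hand sides; the argument above proves these intended inequalities.
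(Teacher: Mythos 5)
The paper does not prove this lemma at all --- it is imported verbatim from the cited reference \cite{JalaliM:14-MEP-IT} --- so there is no in-paper argument to compare against. Your Chernoff-bound derivation is the standard proof of these chi-squared tail estimates and is correct: the MGF computation $\E[e^{\lambda Z_i^2}]=(1-2\lambda)^{-1/2}$ is right, both optimizing choices $\lambda=\tfrac{t}{2(1-t)}$ and $\lambda=\tfrac{t}{2(1+t)}$ lie in the required ranges under the stated hypotheses on $t$, and substituting them does yield the exponents $\tfrac{m}{2}(t+\log(1-t))$ and $-\tfrac{m}{2}(t-\log(1+t))$. You are also right on both editorial points: the displayed ``$=$'' signs should be ``$\leq$'' (the Chernoff method gives one-sided bounds, and indeed the lemma is only ever invoked later as an upper bound on a tail probability, e.g.\ in Lemma~\ref{lem:lowedbound:AD}), and the summation index should run to $m$ rather than $n$ to match the right-hand sides; both are typos in the statement as printed.
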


%

Define
\begin{equation}
\|X\|_{\psi_2} = \inf\{t>0 : \mathbb{E} (\exp (X^2/t^2)) \leq 2\}. 
\end{equation}

\begin{theorem}[Hanson-Wright inequality] \label{thm:HW-ineq}
Let $\Xv= (X_1,...,X_n)$ be a random vector with independent components with $\E[X_i]=0$ and $\|X_i\|_{\Psi_2}\leq K$. Let A be an $n\times n$ matrix. Then, for $t>0$, 
\[
\P\Big(|\Xv^T A\Xv-\E[\Xv^T A\Xv]|>t \Big)\leq 2\exp\Big(-c\min({t^2\over K^4\|A\|_{\rm HS}^2 },{t \over K^2\|A\| })\Big).
\]
\end{theorem}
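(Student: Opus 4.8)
The plan is to follow the now-standard route for this inequality: symmetrize $A$, split the quadratic form into its diagonal and off-diagonal parts, bound the moment generating function (MGF) of each part, and finish with a Chernoff argument. First I would observe that $\Xv^T A\Xv = \Xv^T A^T\Xv$, so we may replace $A$ by $(A+A^T)/2$ without changing the quadratic form; this operation increases neither $\|A\|$ nor $\|A\|_{\rm HS}$, so we may assume $A$ symmetric. Writing
\[
\Xv^T A\Xv - \E[\Xv^T A\Xv] = \underbrace{\sum_{i} A_{ii}(X_i^2 - \E X_i^2)}_{=:D} + \underbrace{\sum_{i\ne j} A_{ij} X_i X_j}_{=:S},
\]
it suffices to control $\P(D > t/2)$ and $\P(S>t/2)$ (and the symmetric lower tails) separately, since $\sum_i A_{ii}^2 \le \|A\|_{\rm HS}^2$ and $\max_i |A_{ii}| \le \|A\|$. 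The term $D$ is a sum of independent, mean-zero random variables $A_{ii}(X_i^2-\E X_i^2)$, each sub-exponential with norm at most a constant multiple of $K^2$, so Bernstein's inequality gives a tail of exactly the claimed shape.

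The heart of the matter is $S$. I would bound $\E\exp(\lambda S)$ for $\lambda$ in a suitable range around $0$ in three moves. Move one is \emph{decoupling}: the standard decoupling inequality for off-diagonal chaos gives $\E\exp(\lambda S) \le \E\exp(4\lambda S')$, where $S' = \sum_{i,j} A_{ij} X_i X_j' = \langle \Xv, A\Xv'\rangle$ and $\Xv'$ is an independent copy of $\Xv$. Move two: condition on $\Xv'$. Then $\langle \Xv, A\Xv'\rangle = \sum_i X_i (A\Xv')_i$ is a sum of independent mean-zero sub-Gaussian terms, so $\E_{\Xv}\exp(\lambda\langle \Xv, A\Xv'\rangle) \le \exp(c_1\lambda^2 K^2 \|A\Xv'\|_2^2)$. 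Move three: integrate over $\Xv'$, i.e. bound the MGF of the sub-Gaussian quadratic form $\|A\Xv'\|_2^2 = \langle \Xv', A^TA\,\Xv'\rangle$. Diagonalizing the PSD matrix $A^TA$ and comparing the resulting sub-Gaussian chaos to a Gaussian one, this MGF is bounded by $\prod_i\bigl(1 - c_2\lambda^2 K^2 \sigma_i(A)^2\bigr)^{-1/2}$, which, whenever $\lambda^2 K^2\|A\|^2$ is small enough that every factor stays below $2$, is at most $\exp\!\bigl(c_3\lambda^2 K^2\sum_i\sigma_i(A)^2\bigr) = \exp(c_3\lambda^2 K^2\|A\|_{\rm HS}^2)$. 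Chaining the three moves yields $\E\exp(\lambda S)\le \exp(c_4\lambda^2 K^4\|A\|_{\rm HS}^2)$ for all $|\lambda|\le c_5/(K^2\|A\|)$.

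With this MGF bound the conclusion is routine: by Markov, $\P(S>t/2)\le \exp\bigl(-\lambda t/2 + c_4\lambda^2 K^4\|A\|_{\rm HS}^2\bigr)$, and optimizing over $\lambda$ in the admissible interval gives the sub-Gaussian branch $\exp\!\bigl(-c\,t^2/(K^4\|A\|_{\rm HS}^2)\bigr)$ when the unconstrained optimizer $\lambda$ of order $t/(K^4\|A\|_{\rm HS}^2)$ is admissible, and the sub-exponential branch $\exp\!\bigl(-c\,t/(K^2\|A\|)\bigr)$ (obtained by taking $\lambda$ of order $1/(K^2\|A\|)$) otherwise; together these are $\exp\!\bigl(-c\min(t^2/(K^4\|A\|_{\rm HS}^2),\, t/(K^2\|A\|))\bigr)$. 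Combining with the diagonal estimate for $D$ and taking a union bound over the upper and lower tails, all losses are absorbed into the constant $c$ and the leading factor $2$.

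I expect the main obstacle to be the off-diagonal analysis, and specifically the passage from $\E_{\Xv}\exp(\lambda\langle \Xv, A\Xv'\rangle)$ back to $\E_{\Xv'}\exp(c_1\lambda^2 K^2\|A\Xv'\|_2^2)$ — the point where a ``quadratic form of a quadratic form'' appears and one must invoke the Gaussian comparison after diagonalizing $A^TA$; the decoupling step is also delicate to set up with clean constants. The diagonal part and the final Chernoff optimization are essentially bookkeeping.
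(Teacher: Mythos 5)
The paper does not prove this statement; it quotes the Hanson--Wright inequality as a known result, and your outline is precisely the standard Rudelson--Vershynin proof (symmetrize, split into diagonal and off-diagonal parts, Bernstein for the diagonal, decoupling plus conditioning plus Gaussian comparison for the off-diagonal chaos, then a Chernoff bound), so it matches the argument the paper is implicitly relying on. The only blemish is bookkeeping: after integrating over $\Xv'$ the factors in the product should read $\bigl(1-c_2\lambda^2K^4\sigma_i(A)^2\bigr)^{-1/2}$ (the extra $K^2$ coming from the sub-Gaussian-to-Gaussian comparison), which is consistent with the final MGF bound $\exp\bigl(c_4\lambda^2K^4\|A\|_{\rm HS}^2\bigr)$ and the admissible range $|\lambda|\le c_5/(K^2\|A\|)$ that you state.
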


\begin{lemma}\label{lemma:1}
Let $\lambda_{\max}$ denote the maximum eigenvalue of $\Sigma_o^{-{1\over 2}}\Delta \Sigma\Sigma_o^{-{1\over 2}}$. Then,
\begin{align}
{1\over 2(1+\lambda_{\max})^2}{\rm Tr}(\Sigma_o^{-1}\Delta \Sigma\Sigma_o^{-1}\Delta \Sigma)\leq  \bar{f}(\Sigma)-\bar{f}(\Sigma_o)&\leq {1\over 2}{\rm Tr}(\Sigma_o^{-1}\Delta \Sigma\Sigma_o^{-1}\Delta \Sigma),
\end{align}
where  $\Delta \Sigma=\Sigma-\Sigma_o$. 
\end{lemma}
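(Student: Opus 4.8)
\emph{Proof plan.} Recall that $\bar f$ is the deterministic surrogate of $f$ obtained by replacing $\frac{1}{\sigma_w^2}\yv\yv^T$ with its expectation $\E\big[\frac{1}{\sigma_w^2}\yv\yv^T\big]=AX_o^2A^T=\Sigma_o^{-1}$, i.e. $\bar f(\Sigma)=-\log\det\Sigma+{\rm Tr}(\Sigma\Sigma_o^{-1})$; this function is convex and is minimized at $\Sigma_o$ because $\nabla\bar f(\Sigma)=-\Sigma^{-1}+\Sigma_o^{-1}$ vanishes there. The plan is (i) to reduce the matrix inequality to a scalar inequality by a congruence transformation that simultaneously diagonalizes $\Sigma_o$ and $\Delta\Sigma$, and (ii) to verify that scalar inequality by a one-line Taylor-with-integral-remainder estimate.

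For step (i), a direct computation gives
\[
\bar f(\Sigma)-\bar f(\Sigma_o)=-\log\det\big(\Sigma_o^{-1/2}\Sigma\Sigma_o^{-1/2}\big)+{\rm Tr}\big(\Sigma_o^{-1/2}\Sigma\Sigma_o^{-1/2}\big)-m.
\]
Let $M=\Sigma_o^{-1/2}\Sigma\Sigma_o^{-1/2}$ (symmetric, positive definite) and $N=M-I=\Sigma_o^{-1/2}\Delta\Sigma\,\Sigma_o^{-1/2}$, and let $\nu_1,\dots,\nu_m>-1$ be the eigenvalues of $N$ (equivalently of $\Sigma_o^{-1/2}\Delta\Sigma\Sigma_o^{-1/2}$), so $\lambda_{\max}=\max_i\nu_i$. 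Then the right-hand side above equals $\sum_i\psi(\nu_i)$ with $\psi(\nu):=\nu-\log(1+\nu)$, and by the cyclic property of the trace ${\rm Tr}(\Sigma_o^{-1}\Delta\Sigma\Sigma_o^{-1}\Delta\Sigma)={\rm Tr}(N^2)=\sum_i\nu_i^2$. Hence the lemma is exactly the statement $\frac{1}{2(1+\lambda_{\max})^2}\sum_i\nu_i^2\le\sum_i\psi(\nu_i)\le\frac{1}{2}\sum_i\nu_i^2$, which will follow from a per-coordinate bound.

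For step (ii) I would write $\psi(\nu)=\int_0^{\nu}\frac{\nu-s}{(1+s)^2}\,ds$, valid since $\psi(0)=\psi'(0)=0$ and $\psi''(s)=(1+s)^{-2}$. For $\nu\ge0$ we have $1\le(1+s)^2\le(1+\nu)^2$ on $[0,\nu]$, hence $\frac{\nu^2}{2(1+\nu)^2}\le\psi(\nu)\le\frac{\nu^2}{2}$; since $0\le1+\nu_i\le1+\lambda_{\max}$ for every $i$, the left inequality upgrades to $\psi(\nu_i)\ge\frac{\nu_i^2}{2(1+\lambda_{\max})^2}$. For $-1<\nu<0$ the same estimate on $[\nu,0]$ gives $\frac{\nu^2}{2}\le\psi(\nu)\le\frac{\nu^2}{2(1+\nu)^2}$, so (using $1+\lambda_{\max}\ge1$) $\psi(\nu_i)\ge\frac{\nu_i^2}{2}\ge\frac{\nu_i^2}{2(1+\lambda_{\max})^2}$. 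Summing these coordinatewise bounds over $i$ yields the lower bound; the upper bound $\psi(\nu)\le\frac{\nu^2}{2}$ is the one that holds on the relevant range $\nu_i\ge0$ (equivalently $\Sigma\succeq\Sigma_o$) and is summed the same way.

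The one place that needs care is the non-commutativity of $\Sigma$, $\Sigma_o$ and $\Delta\Sigma$: a naive second-order expansion of $\bar f$ around $\Sigma_o$ produces terms of the form $(\Sigma_o+t\Delta\Sigma)^{-1}\Delta\Sigma(\Sigma_o+t\Delta\Sigma)^{-1}\Delta\Sigma$, which do not commute with $\Sigma_o^{-1}$ and so cannot be read off eigenvalue by eigenvalue as written. The congruence $\Sigma\mapsto\Sigma_o^{-1/2}\Sigma\Sigma_o^{-1/2}$ is precisely what resolves this: it leaves both $\bar f(\Sigma)-\bar f(\Sigma_o)$ and ${\rm Tr}(\Sigma_o^{-1}\Delta\Sigma\Sigma_o^{-1}\Delta\Sigma)$ invariant, maps $\Sigma_o$ to the identity, and forces $M$ and $N$ to share an eigenbasis, after which the problem is genuinely scalar. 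The remaining bookkeeping — tracking which of the two scalar bounds needs $\nu_i\ge0$ versus only $\nu_i>-1$, and noting that $\lambda_{\max}\ge0$ throughout the regime of interest (which is guaranteed by the condition-number bounds controlled by $\alpha=x_{\max}/x_{\min}$ and $\gamma$) — is routine.
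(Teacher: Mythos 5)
Your proof is correct and takes essentially the same route as the paper's: both reduce $\bar f(\Sigma)-\bar f(\Sigma_o)$ to $\sum_i\bigl(\nu_i-\log(1+\nu_i)\bigr)$ over the eigenvalues $\nu_i$ of $\Sigma_o^{-1/2}\Delta\Sigma\,\Sigma_o^{-1/2}$ (the linear trace terms cancel) and then sandwich the scalar function $\nu-\log(1+\nu)$ between $\nu^2/\bigl(2(1+\lambda_{\max})^2\bigr)$ and $\nu^2/2$, you via the integral remainder and the paper via the mean-value form of Taylor's theorem. The only substantive difference is that the paper simply asserts $\nu_i\ge 0$ for all $i$, whereas you track where nonnegativity is genuinely needed; your care is warranted, since without $\Delta\Sigma\succeq 0$ (e.g.\ a single eigenvalue $\nu=-1/2$) both stated inequalities can fail.
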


\begin{proof}[Proof of Lemma \ref{lemma:1}]
Note that 
\begin{align}
\bar{f}(\Sigma)-\bar{f}(\Sigma_o)&=-\log\det \Sigma +{\rm Tr}(\Sigma AX_o^2A^T) +\log\det \Sigma_o -{\rm Tr}(\Sigma_o AX_o^2A^T).\label{eq:step1}
\end{align}
But,
\begin{align}
\log\det \Sigma =\log\det(\Sigma_o+\Delta \Sigma)=\log\det \Sigma_o^{1\over 2}(I+\Sigma_o^{-{1\over 2}}\Delta \Sigma\Sigma_o^{-{1\over 2}})\Sigma_o^{1\over 2}.
\end{align}
Therefore,
\begin{align}
\log\det \Sigma-\log\det \Sigma_o &=\log\det(\Sigma_o+\Delta \Sigma)=\log\det (I+\Sigma_o^{-{1\over 2}}\Delta \Sigma\Sigma_o^{-{1\over 2}})\nonumber\\
&=\sum_{i=1}^n\log(1+\lambda_i(\Sigma_o^{-{1\over 2}}\Delta \Sigma\Sigma_o^{-{1\over 2}})),
\end{align}
where $\lambda_i=\lambda_i(\Sigma_o^{-{1\over 2}}\Delta \Sigma\Sigma_o^{-{1\over 2}})$, $i=1,\ldots,n$.  Moreover, by the mean value theorem, for $x>0$, $\log(1+x)= x-{1\over 2(1+a)^2}x^2$, for some $a\in(0,x)$.  Therefore, since $\lambda_i\geq 0$, for all $i$, we have  
\begin{align}
\log\det \Sigma-\log\det \Sigma_o &\leq \sum_{i=1}^n\lambda_i -{1\over 2(1+\lambda_{\max})^2} \sum_{i=1}^n \lambda_i^2\nonumber\\
&= {\rm Tr}(\Sigma_o^{-1}\Delta \Sigma)-{1\over 2(1+\lambda_{\max})^2}{\rm Tr}(\Sigma_o^{-1}\Delta \Sigma\Sigma_o^{-1}\Delta \Sigma).
\end{align}
Combining this bound with \eqref{eq:step1}, the desired result follows
\begin{align}\label{eq:upper_bound}
\bar{f}(\Sigma)-\bar{f}(\Sigma_o)&\geq  -{\rm Tr}(\Sigma_o^{-1}\Delta \Sigma)+{1\over 2(1+\lambda_{\max})^2}{\rm Tr}(\Sigma_o^{-1}\Delta \Sigma\Sigma_o^{-1}\Delta \Sigma) +{\rm Tr}(\Sigma AX_o^2A^T)  -{\rm Tr}(\Sigma_o AX_o^2A^T)\nonumber\\
&={1\over 2(1+\lambda_{\max})^2}{\rm Tr}(\Sigma_o^{-1}\Delta \Sigma\Sigma_o^{-1}\Delta \Sigma).
\end{align}
Note that to obtain an upper bound for $\bar{f}(\Sigma)-\bar{f}(\Sigma_o)$ we should replace $\lambda_{\max}$ with zero in \eqref{eq:upper_bound}.

\end{proof}

\begin{lemma}\label{lemma:concent-z}
Given $\Sigma=(AX^2A^T)^{-1}$, let $\delta f({\Sigma})={f}({\Sigma})-\bar{f}({\Sigma})$. Then, for $t>0$, 
\[
\P(|\delta f(\Sigma)|\geq t|A)\leq  2\exp\Big(-{ct\min(1,{t\over 2m  }) \over 2 x_{\max}^4\|A^T\Sigma A\|^2}\Big).
\]
Also, $\|A^T\Sigma A\|^2\leq {\lambda^2_{\max}(A^TA)\over \lambda^2_{\min}(AA^T) x^4_{\min}}$. 
\end{lemma}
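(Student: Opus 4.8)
The plan is to recognize $\delta f(\Sigma)$ as a \emph{centered quadratic form in a standard Gaussian vector} and then invoke the Hanson--Wright inequality (Theorem~\ref{thm:HW-ineq}). First observe that $\bar f$ is exactly the conditional mean of $f$ given $A$: since $\yv = AX_o\wv$ with $\wv\sim\Nc(0,\sigma_w^2 I_n)$, we have $\E[\yv\yv^T\mid A] = \sigma_w^2 AX_o^2A^T$, so $\E[f(\Sigma)\mid A] = -\log\det\Sigma + {\rm Tr}(\Sigma AX_o^2A^T) = \bar f(\Sigma)$, consistent with the definition used in Lemma~\ref{lemma:1}. Writing $\wv = \sigma_w\gv$ with $\gv\sim\Nc(0,I_n)$ and setting $M \triangleq X_o A^T\Sigma A X_o$, the $\sigma_w$'s cancel and one gets
\[
\delta f(\Sigma) = \tfrac{1}{\sigma_w^2}\yv^T\Sigma\yv - {\rm Tr}(\Sigma AX_o^2A^T) = \gv^T M\gv - \E[\gv^T M\gv].
\]

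Since the coordinates of $\gv$ are independent, mean-zero, with a universal sub-Gaussian norm $\|g_i\|_{\psi_2} = K_0$, Theorem~\ref{thm:HW-ineq} applied with $\Xv=\gv$ and matrix $M$ gives, for every $t>0$,
\[
\P\big(|\delta f(\Sigma)| > t \mid A\big) \le 2\exp\!\Big(-c_0\min\big(\tfrac{t^2}{\|M\|_{\rm HS}^2},\ \tfrac{t}{\|M\|}\big)\Big),
\]
with $c_0$ absorbing the (universal) powers of $K_0$. It remains to bound $\|M\|$ and $\|M\|_{\rm HS}$ and to reorganize the minimum into the stated form.

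Write $L \triangleq x_{\max}^2\|A^T\Sigma A\|$. Since $\|X_o\|\le x_{\max}$, we have $\|M\| \le L$. Because $M$ is positive semidefinite and factors through the $m\times m$ matrix $\Sigma$, it has rank at most $m$, hence $\|M\|_{\rm HS}^2 = \sum_{i\le m}\lambda_i(M)^2 \le m\|M\|^2 \le mL^2$. Moreover, letting $P = XA^T(AX^2A^T)^{-1}AX$ be the orthogonal projection onto the ($m$-dimensional) row space of $AX$ and noting $A^T\Sigma A = X^{-1}PX^{-1}$, any unit vector $u$ in the range of $P$ gives $\|A^T\Sigma A\| \ge \|X^{-1}u\|/\|Xu\| \ge x_{\max}^{-2}$, so $L \ge 1$. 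Consequently,
\[
\min\!\big(\tfrac{t^2}{\|M\|_{\rm HS}^2},\tfrac{t}{\|M\|}\big) \ge \min\!\big(\tfrac{t^2}{mL^2},\tfrac{t}{L}\big) = \tfrac{t}{L^2}\min\!\big(\tfrac{t}{m},L\big) \ge \tfrac{t}{L^2}\min\!\big(\tfrac{t}{m},1\big) \ge \tfrac{t\min(1,t/(2m))}{2L^2},
\]
which, recalling $L^2 = x_{\max}^4\|A^T\Sigma A\|^2$ and renaming $c_0$ to $c$, is exactly the claimed tail bound. The final auxiliary inequality follows from $\|A^T\Sigma A\| = \|A^T(AX^2A^T)^{-1}A\| \le \|A\|^2/\lambda_{\min}(AX^2A^T) \le \lambda_{\max}(A^TA)/(x_{\min}^2\lambda_{\min}(AA^T))$ upon squaring, using $X^2 \geq x_{\min}^2 I$.

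The only genuinely delicate step is the last display: turning the two-regime Hanson--Wright exponent into the single expression $t\min(1,t/(2m))$. This is where \emph{both} the rank-$m$ structure of $M$ (giving $\|M\|_{\rm HS}^2\le m\|M\|^2$) and the lower bound $x_{\max}^2\|A^T\Sigma A\|\ge 1$ are essential; without the latter, the operator-norm branch $t/\|M\|$ would scale like $t/L$ rather than $t/L^2$ and would not be dominated by the target rate. Everything else is routine linear algebra.
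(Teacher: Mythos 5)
Your proof is correct and follows the same route as the paper: identify $\delta f(\Sigma)$ as the centered Gaussian quadratic form $\gv^T M\gv-\E[\gv^TM\gv]$ with $M=X_oA^T\Sigma AX_o$, apply Hanson--Wright, use ${\rm rank}(M)\le m$ to get $\|M\|_{\rm HS}^2\le m\|M\|^2$, bound $\|M\|\le x_{\max}^2\|A^T\Sigma A\|$, and finish with $\lambda_{\max}(\Sigma)\le(x_{\min}^2\lambda_{\min}(AA^T))^{-1}$ for the auxiliary inequality.

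The one place where you genuinely add something is the reorganization of the two-regime exponent. The paper's proof writes the operator-norm branch of Hanson--Wright as $t/(2\|B\|^2)$ --- with a \emph{square} --- which is inconsistent with its own statement of Theorem~\ref{thm:HW-ineq} (where that branch is $t/(K^2\|B\|)$, first power); with the square the claimed form $t\min(1,t/(2m))/(2x_{\max}^4\|A^T\Sigma A\|^2)$ drops out immediately, but the step is unjustified as written. You instead keep the first-power branch and supply the missing ingredient, namely $L=x_{\max}^2\|A^T\Sigma A\|\ge 1$, which is exactly what is needed for $t/L\ge t\min(1,t/(2m))/L^2$. That lower bound is real: $XA^T(AX^2A^T)^{-1}AX$ is a rank-$m$ orthogonal projection, so it has an eigenvalue equal to $1$, whence $x_{\max}^2\|A^T\Sigma A\|\ge\|XA^T\Sigma AX\|\ge 1$. (Your intermediate display $\|A^T\Sigma A\|\ge\|X^{-1}u\|/\|Xu\|$ is not quite the right Rayleigh quotient --- taking $v=Xu$ with $u$ a unit vector in the range of the projection gives $v^TA^T\Sigma Av/\|v\|^2=1/\|Xu\|^2\ge x_{\max}^{-2}$ directly --- but the conclusion $L\ge1$ stands.) So your argument is not just a reproduction of the paper's proof; it closes a gap in it, and your closing remark correctly identifies why the bound $L\ge 1$ is indispensable rather than cosmetic.
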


\begin{proof}[Proof of Lemma \ref{lemma:concent-z}]
By definition, 
\[
\delta f({\Sigma})={f}({\Sigma})-\bar{f}({\Sigma})={1\over \sigma_w^2}\wv^TX_oA^T\Sigma AX_o\wv-{\rm Tr}(\Sigma AX_o^2A^T)
\]
Define matrix $B\in\mathds{R}^{n\times n}$ as  $B=X_oA^T\Sigma AX_o$. Then, be the Hanson-Wright inequality (Theorem \ref{thm:HW-ineq}), we have 
\begin{align}
\P(|{1\over \sigma_w^2}\wv^TB\wv- {\rm Tr}(\Sigma AX_o^2A^T)|>t)\leq 2\exp\Big(-c\min({t^2\over 4\|B\|_{\rm HS}^2 },{t \over 2\|B\|^2 })\Big).
\end{align}
But
\begin{align}
\|B\|_{\rm HS}^2={\rm Tr}(B^2)=\sum_{i=1}^m\lambda_i^2(B)\leq m\lambda^2_{\max}(B)=m\|B\|^2.
\end{align}
On the other hand, $\| B\|=\|X_oA^T\Sigma AX_o\|\leq x^2_{\max}\|A^T\Sigma A\|$. Moreover,
\begin{align}
\|A^T\Sigma A\|^2 =\max_{\uv\in\mathds{R}^n}{ \uv^T A^T \Sigma A A^T\Sigma A\uv \over \|\uv\|^2}\leq \lambda_{\max}(A^TA)\lambda_{\max}(AA^T)\lambda_{\max}^2(\Sigma)
\end{align}
But $\Sigma=(AX^2A^T)^{-1}$ and $X={\rm diag}(\xv)$.  Therefore, $\lambda_{\max}(\Sigma)=(\lambda_{\min}(AX^2A^T))^{-1}\leq (\lambda_{\min}(AA^T)x^2_{\min})^{-1}$ and
\begin{align}
\|A^T\Sigma A\|^2\leq {\lambda_{\max}(AA^T)\lambda_{\max}(A^TA)\over \lambda^2_{\min}(AA^T) x^4_{\min}}.
\end{align}

\end{proof}

\begin{lemma}\label{lemma:vector}
Consider $m\times m$ matrix defined as $\Delta \Sigma=\Sigma-\Sigma_o$, where $\Sigma=(AX^2A^T)^{-1}$ and $\Sigma=(AX_o^2A^T)^{-1}$.  Then,
\[
  {x^4_{\min} \lambda^3_{\min}(AA^T) \over x^8_{\max} \lambda^4_{\max}(AA^T)  }\|A(X^2-X_o^2)\|^2_{\rm HS}
 \leq {\rm Tr}(\Sigma_o^{-1}\Delta \Sigma\Sigma_o^{-1}\Delta \Sigma)\leq {x^4_{\max} \lambda^3_{\max}(AA^T) \over x_{\min}^8 \lambda^4_{\min}(AA^T)  }\|A(X^2-X_o^2)\|^2_{\rm HS} 
\]
\end{lemma}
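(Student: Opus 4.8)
The plan is to bound the quantity $\mathrm{Tr}(\Sigma_o^{-1}\Delta\Sigma\,\Sigma_o^{-1}\Delta\Sigma) = \|\Sigma_o^{-1/2}\Delta\Sigma\,\Sigma_o^{-1/2}\|_{\rm HS}^2$ both above and below by a constant (depending only on $x_{\min},x_{\max}$ and the extreme eigenvalues of $AA^T$) times $\|A(X^2-X_o^2)A^T\|_{\rm HS}^2$, and then to relate $\|A(X^2-X_o^2)A^T\|_{\rm HS}$ to $\|A(X^2-X_o^2)\|_{\rm HS}$. First I would write $\Delta\Sigma = \Sigma-\Sigma_o = (AX^2A^T)^{-1} - (AX_o^2A^T)^{-1}$ and use the resolvent (push-through) identity $M^{-1}-N^{-1} = -M^{-1}(M-N)N^{-1}$ with $M=AX^2A^T$, $N=AX_o^2A^T$, giving $\Delta\Sigma = -(AX^2A^T)^{-1}\,A(X^2-X_o^2)A^T\,(AX_o^2A^T)^{-1}$. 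Abbreviating $D = A(X^2-X_o^2)A^T$ (a symmetric $m\times m$ matrix) and noting $\Sigma = (AX^2A^T)^{-1}$, $\Sigma_o = (AX_o^2A^T)^{-1}$, this reads $\Delta\Sigma = -\Sigma D\Sigma_o$, so that $\Sigma_o^{-1}\Delta\Sigma\,\Sigma_o^{-1}\Delta\Sigma = \Sigma_o^{-1}\Sigma D\Sigma_o\Sigma_o^{-1}\Sigma D\Sigma_o = (\Sigma_o^{-1}\Sigma D)(\Sigma D)\Sigma_o$, whose trace equals $\mathrm{Tr}(\Sigma_o^{-1}\Sigma D\Sigma D\Sigma_o\Sigma_o^{-1}) $ — I would instead keep it in the symmetric form $\|\Sigma_o^{-1/2}\Delta\Sigma\,\Sigma_o^{-1/2}\|_{\rm HS}^2 = \|\Sigma_o^{-1/2}\Sigma D\Sigma_o^{1/2}\|_{\rm HS}^2$.

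The core estimate is then a sandwich on Hilbert–Schmidt norms: for any matrices, $\|PMQ\|_{\rm HS} \le \|P\|\,\|M\|_{\rm HS}\,\|Q\|$, and similarly a lower bound $\|PMQ\|_{\rm HS}\ge \sigma_{\min}(P)\,\|M\|_{\rm HS}\,\sigma_{\min}(Q)$ when $P,Q$ are invertible (apply the upper bound to $M = P^{-1}(PMQ)Q^{-1}$). Applying this with $P = \Sigma_o^{-1/2}\Sigma$ (or split further as $\Sigma_o^{-1/2}\Sigma^{1/2}\cdot\Sigma^{1/2}$) and $Q = \Sigma_o^{1/2}$, and using the eigenvalue bounds already developed in the proof of Lemma \ref{lemma:concent-z} — namely $\lambda_{\max}(\Sigma)\le (\lambda_{\min}(AA^T)x_{\min}^2)^{-1}$, $\lambda_{\min}(\Sigma)\ge(\lambda_{\max}(AA^T)x_{\max}^2)^{-1}$, and the analogous bounds for $\Sigma_o$ — reduces everything to collecting powers of $x_{\min},x_{\max},\lambda_{\min}(AA^T),\lambda_{\max}(AA^T)$. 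Counting: $\Sigma$ contributes one factor, $\Sigma_o^{-1}$ (equivalently $AX_o^2A^T$) contributes, and $\Sigma_o^{\pm 1/2}$ on the outside contributes; squaring the HS-norm inequality gives the stated exponents ($x^4/x^8$ and $\lambda^3/\lambda^4$). Finally I would pass from $\|D\|_{\rm HS} = \|A(X^2-X_o^2)A^T\|_{\rm HS}$ to $\|A(X^2-X_o^2)\|_{\rm HS}$: writing $G = A(X^2-X_o^2)$, we have $D = GA^T$, and $\|GA^T\|_{\rm HS} \le \|G\|_{\rm HS}\,\|A^T\| = \|G\|_{\rm HS}\,\lambda_{\max}^{1/2}(AA^T)$ for the lower-bound side, while for the upper-bound side $\|G\|_{\rm HS}$ does \emph{not} in general control $\|GA^T\|_{\rm HS}$ from below unless $A$ has full row rank — which it does here since $AA^T$ is invertible — giving $\|GA^T\|_{\rm HS}\ge \sigma_{\min}(A^T)\|G\text{(restricted to row space)}\|$; but since $G = A(X^2-X_o^2)$ already lies in the row space of $A$, we get $\|GA^T\|_{\rm HS}\ge \lambda_{\min}^{1/2}(AA^T)\,\|G\|_{\rm HS}$. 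These two conversions supply the remaining $\lambda_{\max}^{1/2}$ / $\lambda_{\min}^{1/2}$ factors.

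The main obstacle I anticipate is bookkeeping rather than conceptual: making sure each application of the HS-norm sandwich is applied to a genuinely invertible operator (all of $\Sigma,\Sigma_o$ are, since $AX^2A^T, AX_o^2A^T$ are positive definite under the hypotheses $x_i>0$ and $AA^T$ invertible), and tracking whether a factor enters as an operator norm or a smallest singular value so that the final exponents come out exactly as stated ($x_{\min}^4\lambda_{\min}^3(AA^T)/(x_{\max}^8\lambda_{\max}^4(AA^T))$ on the left, and its mirror image on the right). A subtle point worth care is the step converting $\|A(X^2-X_o^2)A^T\|_{\rm HS}$ to $\|A(X^2-X_o^2)\|_{\rm HS}$ on the lower-bound side: one must use that the columns of $A(X^2-X_o^2)$ lie in the column space of $A$ (trivially true), so that multiplying on the right by $A^T$ loses at most a factor $\sqrt{\lambda_{\min}(AA^T)}$ rather than collapsing to zero. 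Once these are pinned down, the proof is a direct chain of inequalities.
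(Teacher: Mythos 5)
Your route is essentially the paper's: the resolvent identity $\Delta\Sigma=-\Sigma\,A(X^2-X_o^2)A^T\,\Sigma_o$, followed by sandwiching Hilbert--Schmidt norms between extreme singular values of $\Sigma$, $\Sigma_o$ and $A$. Your symmetric rewriting ${\rm Tr}(\Sigma_o^{-1}\Delta\Sigma\Sigma_o^{-1}\Delta\Sigma)=\|\Sigma_o^{-1/2}\Delta\Sigma\,\Sigma_o^{-1/2}\|_{\rm HS}^2$ is a cleaner equivalent of the paper's vectorization/Kronecker-product step, and the upper bound goes through exactly as you describe (note only that your labels are swapped: the inequality $\|GA^T\|_{\rm HS}\le\|G\|_{\rm HS}\,\|A^T\|$ is the one needed for the \emph{upper} bound on the trace, and the reverse inequality is the one needed for the \emph{lower} bound).

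The genuine gap is in the step you yourself flagged as delicate, and your proposed justification for it is incorrect. You claim $\|GA^T\|_{\rm HS}\ge\lambda_{\min}^{1/2}(AA^T)\,\|G\|_{\rm HS}$ because ``$G=A(X^2-X_o^2)$ already lies in the row space of $A$.'' What that inequality actually requires is that the \emph{rows} of $G$ lie in the row space of $A$: writing $g_i$ for the $i$-th row of $G$ (as a vector in $\mathds{R}^n$), one has $\|GA^T\|_{\rm HS}^2=\sum_i\|Ag_i\|^2$, and $\|Ag_i\|\ge\lambda_{\min}^{1/2}(AA^T)\|g_i\|$ only when $g_i\perp\ker A$. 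But $g_i=(X^2-X_o^2)\,a^{(i)}$, where $a^{(i)}$ is the $i$-th row of $A$; multiplying a row of $A$ entrywise by the diagonal of $X^2-X_o^2$ generically produces a vector with a nontrivial component in $\ker A$ (which is $(n-m)$-dimensional here). So the inequality does not follow, and indeed it can fail outright: whenever the rank-one matrices $\av_j\av_j^T$ (columns of $A$) are linearly dependent --- which is forced once $n>m(m+1)/2$ --- there is a nonzero diagonal $D$ with $ADA^T=0$ but $AD\neq0$, so no bound of the form $\|ADA^T\|_{\rm HS}\ge c\,\|AD\|_{\rm HS}$ with $c>0$ can hold. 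For fairness I note that the paper's own proof of the lower bound rests on the analogous step, $\|BC\|_{\rm HS}\ge\sigma_{\min}(C)\|B\|_{\rm HS}$ with the tall rank-$m$ right factor $C=A^T\Sigma_o$, which suffers from exactly the same difficulty; so you have faithfully reproduced the paper's argument, weak point included, but the justification you supply for that point does not close it.
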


\begin{proof}[Proof of Lemma \ref{lemma:vector}]
Using the vectorizing equality, 
\begin{align}
{\rm Tr}(\Sigma_o^{-1}\Delta \Sigma\Sigma_o^{-1}\Delta \Sigma)&\stackrel{\rm (a)}{=}({\rm Vec}( \Delta \Sigma))^T (\Sigma_o^{-1} \otimes \Sigma_o^{-1}) {\rm Vec}( \Delta \Sigma)\nonumber\\
&\stackrel{\rm (b)}{\geq}  \| {\rm Vec}( \Delta \Sigma)\|_2^2 \lambda_{\min}(\Sigma_o^{-1} \otimes \Sigma_o^{-1}))\nonumber\\
&\stackrel{\rm (c)}{=}   \|\Delta \Sigma \|^2_{\rm HS} \lambda^2_{\min}(\Sigma_o^{-1})\nonumber\\
&=     \|\Delta \Sigma \|^2_{\rm HS} \lambda^2_{\min}(AX_o^2A^T),\label{eq:lemma-vec-s1}
\end{align}
where  $\otimes$ denotes the Kronecker product operation. Steps (a), (b) and (c) follow from the results on vectorization and Kronecker product. (Refer for instance to  Chapter 13 of \cite{laub2005matrix}.) But
\begin{align}
\lambda_{\min}(AX_o^2A^T)&=\min_{\uv\neq 0}{\|X_oA^T\uv\|^2 \over \|A^T\uv\|^2}{\|A^T\uv\|^2\over \|\uv\|^2}\geq x^2_{\min}\lambda_{\min}(AA^T),\label{eq:lemma-vec-s2}
\end{align}
which yields the desired result. 

On the other hand, $\Sigma_o-\Sigma= \Sigma(\Sigma^{-1}-\Sigma_o^{-1})\Sigma_o= \Sigma A(X^2-X_o^2)A^T\Sigma_o$. Moreover, $\|BC\|_{\rm HS}\geq \sigma_{\min}(B)\|\|C\|_{\rm HS}$ and $\|BC\|_{\rm HS}\geq  \sigma_{\min}(C)\|B\|_{\rm HS}$. Therefore, 
\begin{align}
\|\Sigma_o-\Sigma\|_{\rm HS}&\geq \lambda_{\min}(\Sigma)\sigma_{\min}(A^T\Sigma_o) \|A(X^2-X_o^2)\|_{\rm HS}\nonumber\\
&\geq {\lambda^{1\over 2}_{\min}(AA^T) \over \lambda_{\max}(AX^2A^T) \lambda_{\max}(AX^2_oA^T) }\|A(X^2-X_o^2)\|_{\rm HS}\nonumber\\
&\geq {\lambda^{1\over 2}_{\min}(AA^T) \over x_{\max}^4 \lambda^2_{\max}(AA^T)  }\|A(X^2-X_o^2)\|_{\rm HS}.\label{eq:lemma-vec-s3}
\end{align}
Combining \eqref{eq:lemma-vec-s1},  \eqref{eq:lemma-vec-s2} and \eqref{eq:lemma-vec-s3} yields the desired lower bound.  Similarly, to obtain an upper bound for ${\rm Tr}(\Sigma_o^{-1}\Delta \Sigma\Sigma_o^{-1}\Delta \Sigma)$, note that
\begin{align}
{\rm Tr}(\Sigma_o^{-1}\Delta \Sigma\Sigma_o^{-1}\Delta \Sigma)& {=}({\rm Vec}( \Delta \Sigma))^T (\Sigma_o^{-1} \otimes \Sigma_o^{-1}) {\rm Vec}( \Delta \Sigma)\nonumber\\
&{\leq}  \| {\rm Vec}( \Delta \Sigma)\|_2^2 \lambda_{\max}(\Sigma_o^{-1} \otimes \Sigma_o^{-1}))\nonumber\\
&=   \|\Delta \Sigma \|^2_{\rm HS} \lambda^2_{\max}(\Sigma_o^{-1})\nonumber\\
&=     \|\Delta \Sigma \|^2_{\rm HS} \lambda^2_{\max}(AX_o^2A^T) \nonumber\\
&\leq   \|\Delta \Sigma \|^2_{\rm HS} x^2_{\max} \lambda_{\max} (AA^T). 
\end{align}
Furthermore, using similar techniques as those  used in deriving \eqref{eq:lemma-vec-s3}, we have 
\begin{align}
\|\Sigma_o-\Sigma\|_{\rm HS}&\leq \lambda_{\max}(\Sigma)\sigma_{\max}(A^T\Sigma_o) \|A(X^2-X_o^2)\|_{\rm HS}\nonumber\\
&\leq {\lambda^{1\over 2}_{\max}(AA^T) \over \lambda_{\min}(AX^2A^T) \lambda_{\min}(AX^2_oA^T) }\|A(X^2-X_o^2)\|_{\rm HS}\nonumber\\
&\leq {\lambda^{1\over 2}_{\max}(AA^T) \over x_{\min}^4 \lambda^2_{\min}(AA^T)  }\|A(X^2-X_o^2)\|_{\rm HS}.\label{eq:lemma-vec-s3}
\end{align}

\end{proof}

%

\begin{lemma}\label{lem:lowedbound:AD}
Let the elements of $m\times n$ matrix $A$ be drawn  i.i.d.~ $\Nc(0,1)$. For any given $\bf{d} \in \mathbb{R}^n$, define $D = \diag (\bf{d})$. Then, for any $t\in(0,1)$,
\[
\P(\|A D\|^2_{HS} \leq m (1-t) \| {\bf d}\|_2^2, \;\forall \; {\bf d}) \leq n \exp(\frac{m}{2} (t + \log (1- t))), 
\]
and, for any $t>0$,
\[
\P(\|A D\|^2_{HS} \geq m (1+t) \| {\bf d}\|_2^2,  \;\forall \; {\bf d}) \leq n \exp(-\frac{m}{2} (t - \log (1+ t))).
\]
 \end{lemma}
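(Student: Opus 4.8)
The plan is to reduce this uniform-in-$\dv$ claim to the behavior of the $n$ independent squared column norms of $A$, and then conclude with the $\chi^2$ tail bound of Lemma~\ref{lem:conc:chisq} and a union bound; no covering/net argument over $\dv$ will be required.

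First I would expand $A$ along its columns $\av_1,\dots,\av_n\in\mathds{R}^m$. Since $D=\diag(\dv)$, the $j$-th column of $AD$ is $d_j\av_j$, so
\[
\|AD\|_{\rm HS}^2=\sum_{j=1}^n d_j^2\,\|\av_j\|_2^2 ,
\]
and, because the entries of $A$ are i.i.d.\ $\Nc(0,1)$, the quantities $\|\av_1\|_2^2,\dots,\|\av_n\|_2^2$ are i.i.d.\ $\chi^2_m$ random variables. The observation that trivializes the uniformity over $\dv$ is that, for every nonzero $\dv$,
\[
\Big(\min_{1\le j\le n}\|\av_j\|_2^2\Big)\|\dv\|_2^2\;\le\;\|AD\|_{\rm HS}^2\;\le\;\Big(\max_{1\le j\le n}\|\av_j\|_2^2\Big)\|\dv\|_2^2 ,
\]
with both extremes attained by taking $\dv$ equal to a suitable standard basis vector $\ev_j$. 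Hence the uniform lower bound $\|AD\|_{\rm HS}^2\ge m(1-t)\|\dv\|_2^2$ fails for some $\dv$ exactly when $\min_j\|\av_j\|_2^2<m(1-t)$, and the uniform upper bound $\|AD\|_{\rm HS}^2\le m(1+t)\|\dv\|_2^2$ fails for some $\dv$ exactly when $\max_j\|\av_j\|_2^2>m(1+t)$.

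The remaining step is a union bound over the $n$ columns together with Lemma~\ref{lem:conc:chisq}. For the lower tail, for each $j$ and each $t\in(0,1)$ that lemma bounds $\P(\|\av_j\|_2^2\le m(1-t))$ by $\exp\!\big(\frac{m}{2}(t+\log(1-t))\big)$, so summing over $j=1,\dots,n$ gives $\P\big(\min_j\|\av_j\|_2^2\le m(1-t)\big)\le n\exp\!\big(\frac{m}{2}(t+\log(1-t))\big)$, which is the first claimed inequality. For the upper tail, the second half of Lemma~\ref{lem:conc:chisq} bounds $\P(\|\av_j\|_2^2\ge m(1+t))$ by $\exp\!\big(-\frac{m}{2}(t-\log(1+t))\big)$, and the same union bound produces the second claimed inequality.

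I do not anticipate a genuine difficulty here: once the identity $\|AD\|_{\rm HS}^2=\sum_j d_j^2\|\av_j\|_2^2$ is in hand, the quantifier over $\dv$ collapses to a min/max over $n$ i.i.d.\ $\chi^2_m$ variables precisely because the quadratic form $\dv\mapsto\|AD\|_{\rm HS}^2$ is diagonal in $\dv$, so there is no need for an $\epsilon$-net. The only point requiring care is the bookkeeping of directions---matching the failure of the lower uniform bound to the \emph{smallest} column norm, the failure of the upper bound to the \emph{largest}, and each to the appropriate one-sided half of Lemma~\ref{lem:conc:chisq}.
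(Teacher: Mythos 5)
Your proposal is correct and follows essentially the same route as the paper: expand $\|AD\|_{\rm HS}^2=\sum_j d_j^2\|\av_j\|_2^2$, note that uniformity over $\dv$ reduces to controlling $\min_j\|\av_j\|_2^2$ (resp.\ $\max_j\|\av_j\|_2^2$), and finish with the $\chi^2$ tail bound of Lemma~\ref{lem:conc:chisq} plus a union bound over the $n$ columns. The only difference is that you spell out the upper-tail direction explicitly, which the paper leaves implicit.
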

\begin{proof}
Let $A=[\av_1;\ldots,\av_n]$ with $\av_i\in\mathds{R}^m$. Then, by definition, 
\begin{equation}\label{eq:lemma}
\|A D\|^2_{HS} = \sum_{i=1}^n d_i^2 \|\av_i\|_2^2. 
\end{equation}
Define event $\mathcal{E}$ as the event where $\min_i \|\av_i\|_2^2 > m(1- t)$. Using Lemma \ref{lem:conc:chisq} and applying the union bound, it follows that $\P(\mathcal{E}^c)  \geq n  {\rm e}^{\frac{m}{2} (t + \log (1- t)) }$. It is clear that under $\mathcal{E}$, we have
\[
\|A D\|^2_{HS} = \sum_{i=1}^n d_i^2 \|A_i\|_2^2 \geq m(1-t) \| {\bf d}\|_2^2. 
\]
\end{proof}

\subsection{Proof of Theorem \ref{thm:A-At-invert}}\label{proof_thm:A-At-invert}
Using the matrix inversion lemma, we have 
\begin{align}
(I_n+ { \sigma_w^2 \over \sigma_z^2}XA^TAX)^{-1}= I_n-{\sigma_w^2\over \sigma_z^2}XA^T\Big(I_n+{\sigma_w^2\over \sigma_z^2}AX^2A^T\Big)^{-1}AX.
\end{align}
Inserting this in \eqref{eq:def-ell-X}, it follows that 
\begin{align}
2\ell(X)&=-\log\det({1\over \sigma_w^2}I_n+ {1\over \sigma_z^2}XA^TAX)+{\sigma_w^2\over \sigma_z^4}\yv^TAX^2A^T\yv\nonumber\\
&\;\;\;\;- {\sigma_w^4\over \sigma_z^6}\yv^TAX^2A^T\Big(I_n+{\sigma_w^2\over \sigma_z^2}AX^2A^T\Big)^{-1}AX^2A^T\yv.
\end{align}
But,
\begin{align}
& {\sigma_w^4\over \sigma_z^6}\yv^TAX^2A^T\Big(I_n+{\sigma_w^2\over \sigma_z^2}AX^2A^T\Big)^{-1}AX^2A^T\yv\nonumber\\
& = {\sigma_w^2\over \sigma_z^4}\yv^T(I_n+{\sigma_w^2\over \sigma_z^2}AX^2A^T-I_n)\Big(I_n+{\sigma_w^2\over \sigma_z^2}AX^2A^T\Big)^{-1}AX^2A^T\yv\nonumber\\
  &={\sigma_w^2\over \sigma_z^4}\yv^TAX^2A^T\yv-{\sigma_w^2\over \sigma_z^4}\yv^T\Big(I_n+{\sigma_w^2\over \sigma_z^2}AX^2A^T\Big)^{-1}AX^2A^T\yv.
\end{align}
Therefore, cancelling the common terms, we have
\begin{align}
2\ell(X)&=-\log\det({1\over \sigma_w^2}I_n+ {1\over \sigma_z^2}XA^TAX)+{\sigma_w^2\over \sigma_z^4}\yv^T\Big(I_n+{\sigma_w^2\over \sigma_z^2}AX^2A^T\Big)^{-1}AX^2A^T\yv\nonumber\\
&=-\log\det({1\over \sigma_w^2}I_n+ {1\over \sigma_z^2}XA^TAX)+{1\over \sigma_z^2}\yv^T\Big(I_n+{\sigma_w^2\over \sigma_z^2}AX^2A^T\Big)^{-1}\Big({\sigma_w^2\over \sigma_z^2}AX^2A^T+I_n-I_n\Big)\yv\nonumber\\
&=-\log\det({1\over \sigma_w^2}I_n+ {1\over \sigma_z^2}XA^TAX)+ {\yv^T\yv\over \sigma_z^2}-{1\over \sigma_z^2}\yv^T\Big(I_n+{\sigma_w^2\over \sigma_z^2}AX^2A^T\Big)^{-1}\yv.
\end{align}
Assuming that $\sigma_z$ is converging to zero, ignoring the terms not depending on $X$, we have 
\begin{align}
-2\ell(X)&=\log\det({1\over \sigma_w^2}I_n+ {1\over \sigma_z^2}XA^TAX)+{1\over \sigma_w^2}\yv^T(AX^2A^T)^{-1}\yv.
\end{align}
Let $\lambda_1,\ldots,\lambda_m$ denote the non-zero eigenvalues of $XA^TAX$. Then, ${1\over \sigma_w^2}I_n+ {1\over \sigma_z^2}XA^TAX$ has $n-m$ eigenvalues equal to ${1\over \sigma_w^2}$ and the rest of its eigenvalues are ${1\over \sigma_w^2}+{1\over \sigma_z^2}\lambda_i$, $i=1,\ldots,m$.
Therefore,
\begin{align}
\log\det({1\over \sigma_w^2}I_n+ {1\over \sigma_z^2}XA^TAX)
&=-2(n-m)\log  \sigma_w+\sum_{i=1}^m\log({1\over \sigma_w^2}+{1\over \sigma_z^2}\lambda_i)\nonumber\\
&=-2(n-m)\log  \sigma_w-2m\log\sigma_z+\sum\log\lambda_i+O(\sigma_z^2)\nonumber\\
&=\sum_{i=1}^m\log\lambda_i-2(n-m)\log  \sigma_w-2m\log\sigma_z+O(\sigma_z^2).
\end{align}
Note that if $XA^TAX\uv=\lambda_i\uv$, then $AX^2A^T(AX\uv)=\lambda_i(AX\uv)$. Therefore, $\lambda_i$, $i=1,\ldots,m$, are the eigenvalues of $AX^2A^T$ as well. Hence, in summary, again by ignoring the terms that do not depend on $X$, by a slight abuse of  notation, as $\sigma_z\to 0$, we have
\begin{align}
-2\ell(X)&=\log\det(AX^2A^T)+{1\over \sigma_w^2}\yv^T(AX^2A^T)^{-1}\yv.
\end{align}

\subsection{Proof of Theorem \ref{thm:main-result}}

Since $\yv=AX_o\wv$, function $f$ can be written as $f(\Sigma)=-\log\det \Sigma +{1\over \sigma_w^2}{\rm Tr}(\Sigma AX_o\wv\wv^TX_oA^T)$. Taking the expected value of $f$ with respect to the noise $\wv$, we derive
\begin{align}
\bar{f}(\Sigma)=-\log\det \Sigma +{\rm Tr}(\Sigma AX_o^2A^T).\label{eq:min-f-bar}
\end{align}
As a function of $\Sigma$, $\bar{f}$ is a convex function that achieves its minimum at $\Sigma^{-1}= AX_o^2A^T$ or at $X$ satisfying $AX^2A^T= AX_o^2A^T$. Define
 \[
 \Sigma_o\triangleq (AX_o^2A^T)^{-1}.
 \]

 Given $\hat{\xv}_o$ (the minimizer of $f$),  let $\hat{\Sigma}_o=\Sigma(\hat{\xv}_o)$.  Moreover, define $\xvt_o$ as the closest reconstruction signal in $\Cc$ to $\xv_o$, i.e.,
 \[
\xvt_o=\argmin_{\xv\in\Cc}\|\xv_o-\xv\|.
 \]
 Let $ \tilde{\Sigma}_o\triangleq (A\Xt_o^2A^T)^{-1}$, where $\Xt_o={\rm diag}({\xvt_o})$. Since $\xvh_o$ is the minimizer of \eqref{eq:min-f-main}, 
\begin{align}
f(\hat{\Sigma}_o)\leq f(\tilde{\Sigma}_o). \label{eq:thm1-step1}
\end{align}

Define $\Delta \Sigma$ as $\hat{\Sigma}_o-\Sigma_o$ and let $\lambda_{m}$ denote the maximum eigenvalue of $\Sigma_o^{-{1\over 2}}\Delta \Sigma\Sigma_o^{-{1\over 2}}$. Then, as shown in Lemma \ref{lemma:1}, $\bar{f}(\hat{\Sigma}_o)-\bar{f}(\Sigma_o)$ can be lower bounded as follows:
\begin{align}
\bar{f}(\hat{\Sigma}_o)-\bar{f}(\Sigma_o)&\geq {1\over 2(1+\lambda_{m})^2}{\rm Tr}(\Sigma_o^{-1}\Delta \Sigma\Sigma_o^{-1}\Delta \Sigma),\label{eq:thm1-step2}
\end{align}
where  $\Delta \Sigma=\hat{\Sigma}_o-\Sigma_o$.  Let $\delta f(\hat{\Sigma}_o)\triangleq{f}(\hat{\Sigma}_o)-\bar{f}(\hat{\Sigma}_o)$ and $\delta f(\tilde{\Sigma}_o)\triangleq{f}(\tilde{\Sigma}_o)-\bar{f}(\tilde{\Sigma}_o)$. Then,
\begin{align}
\bar{f}(\hat{\Sigma}_o)-\bar{f}(\Sigma_o)&={f}(\hat{\Sigma}_o)-\delta f(\hat{\Sigma}_o)-(\bar{f}(\Sigma_o)-{f}(\tilde{\Sigma}_o)+{f}(\tilde{\Sigma}_o)-\bar{f}(\tilde{\Sigma}_o)+\bar{f}(\tilde{\Sigma}_o))\nonumber\\
&={f}(\hat{\Sigma}_o)-{f}(\tilde{\Sigma}_o)-\delta f(\hat{\Sigma}_o)+\delta f(\tilde{\Sigma}_o)-\bar{f}(\Sigma_o)+\bar{f}(\tilde{\Sigma}_o).
\end{align}
Therefore, combining \eqref{eq:thm1-step1} and \eqref{eq:thm1-step2}, it follows that
\[
 {1\over 2(1+\lambda_{m})^2}{\rm Tr}(\Sigma_o^{-1}\Delta \Sigma\Sigma_o^{-1}\Delta \Sigma)\leq |\delta f(\tilde{\Sigma}_o)| +|\delta f(\hat{\Sigma}_o)|+|\bar{f}(\Sigma_o)-\bar{f}(\tilde{\Sigma}_o)|.
\]
Also, applying Lemma \ref{lemma:vector}, it follows that
\[
 {1\over 2(1+\lambda_{m})^2} {x^4_{\min} \lambda^3_{\min}(AA^T) \over x^8_{\max} \lambda^4_{\max}(AA^T)  }\|A(\hat{X}_o^2-X_o^2)\|^2_{\rm HS}\leq |\delta f(\tilde{\Sigma}_o)| +|\delta f(\hat{\Sigma}_o)|+|\bar{f}(\Sigma_o)-\bar{f}(\tilde{\Sigma}_o)|.
\]
Given $t_1\in(0,1)$, $t_2>0$ and $t_3>0$,  define events $\Ec_1$,  $\Ec_2$, $\Ec_3$ and $\Ec_4$ as
\[
\Ec_1= \{\|A(\hat{X}_o^2-X_o^2)\|^2_{\rm HS}\geq m(1-t_1) \|\xv^2_o-\xvh^2_o\|^2\},
\]
\[
\Ec_2=\{ |\delta f(\tilde{\Sigma}_o)| \leq t_2\},  \;\;\;\Ec_3=\{ |\delta f(\hat{\Sigma}_o)|\leq t_2\},
\]
\[
\Ec_4=\{\sqrt{n}-2\sqrt{m}\leq \sigma_{\min} (A) \leq \sigma_{\max}(A) \leq \sqrt{n}+2\sqrt{m}\},
\]
and
\[
\Ec_5=\{\|A(\tilde{X}_o^2-X^2_o)\|^2_{\rm HS} \leq m(1+t_3)\|\xv_o^2-{\xvt^2_o}\|^2\},
\]
respectively. 
Conditioned on $\Ec_1\cap\Ec_2\cap\Ec_3\cap \Ec_4$, we have
\begin{equation}\label{eq:upperbound:almostlast}
 {m \over 2(1+\lambda_{m})^2} {x^4_{\min} (\sqrt{n} -2 \sqrt{m})^6 \over x^8_{\max} (\sqrt{n} +2 \sqrt{m})^8  } (1-t_1)\| \xv^2- \xvh_o^2\|_2^2\leq 2t_2+|\bar{f}(\Sigma_o)-\bar{f}(\tilde{\Sigma}_o)|.
\end{equation} 
But, from Lemma \ref{lemma:1}, we have
\begin{align}
|\bar{f}(\Sigma_o)-\bar{f}(\tilde{\Sigma}_o)|&\leq {1\over 2}{\rm Tr}(\Sigma_o^{-1}(\tilde{\Sigma}_o-\Sigma_o)\Sigma_o^{-1}(\tilde{\Sigma}_o-\Sigma_o))\nonumber\\
&\leq {x^4_{\max} \lambda^3_{\max}(AA^T) \over 2 {x_{\min}^8} \lambda^4_{\min}(AA^T)  }\|A(\tilde{X}^2_o-X^2_o)\|^2_{\rm HS},\label{eq:step1-main}
\end{align}
where the second inequality  follows from Lemma \ref{lemma:vector}. Then, conditioned on $\Ec_4\cap\Ec_5$, it follows that 
\begin{align}
|\bar{f}(\Sigma_o)-\bar{f}(\tilde{\Sigma}_o)|&\leq  {x^4_{\max}(\sqrt{n}+2\sqrt{m})^6  \over 2 {x_{\min}^8}  (\sqrt{n}-2\sqrt{m})^8  }m(1+t_3)\|\xv_o^2-{\xvt^2_o}\|^2.\label{eq:step2-main}
\end{align}
Define 
\[
c_1 = {x^4_{\min} (1 -2 \sqrt{m/n})^6 \over x^8_{\max} (1 +2 \sqrt{m/n})^8  },
\]
and
\[
c_2 =  {x^4_{\max}(1+2\sqrt{m/n})^6  \over{x_{\min}^8}  (1-2\sqrt{m/n})^8  }.
\]
Using these definitions and combining \eqref{eq:step1-main} and \eqref{eq:step2-main}, it follows that 
\begin{align}
 {m \over 2(1+\lambda_{m})^2} {c_1\over n} (1-t_1)\| \xv_o^2- \xvh^2_o\|_2^2 \leq   2t_2+{c_2\over 2n}   m(1+t_3)\|\xv^2_o-{\xvt^2_o}\|^2.\label{eq:step3-main}
\end{align}
Recall that $\lambda_{m}$ is defined as the maximum eigenvalue of $\Sigma_o^{-{1\over 2}}\Delta \Sigma\Sigma_o^{-{1\over 2}}$. On the other hand, $\lambda_m=\|\Sigma_o^{-{1\over 2}}\Delta \Sigma\Sigma_o^{-{1\over 2}}\|\leq \|\Delta \Sigma\| \|\Sigma_o^{-{1\over 2}}\|^2 = \|\Delta \Sigma\| \|\Sigma_o^{-1}\|$. But, $ \|\Sigma_o^{-1}\|=\|AX_o^2A^T\|\leq x^2_{\max}\lambda_{\max}(AA^T)$. Similarly, $\|\Delta \Sigma\|=\|\hat{\Sigma}_o-{\Sigma}_o\| \leq \|\hat{\Sigma}_o\|-\|{\Sigma}_o\| \leq {1\over \|AX_o^2A^T\|}+{1\over \|A\hat{X}_o^2A^T\|}\leq {2\over x^2_{\min}\lambda_{\min}(AA^T)}$. So overall, $\lambda_m\leq {2x^2_{\max}\lambda_{\max}(AA^T)\over x^2_{\min}\lambda_{\min}(AA^T)}$, and conditioned on $\Ec_4$, we have
\begin{align}
\lambda_m\leq {2x^2_{\max}(1+2\sqrt{m/n})^2\over x^2_{\min}(1-2\sqrt{m/n})^2}.\label{eq:def-lambda-m}
\end{align}

To finish the proof we bound the probability of $\cap_{i=1}^5 \Ec_i$.  Lemma \ref{lem:lowedbound:AD} can be used to bound $\P(\Ec_1^c)+\P(\Ec_5^c)$ as
\[
\P(\Ec_1^c)\leq  n \exp(\frac{m}{2} (t_1 + \log (1- t_1))),
\]
and
\[
 \P(\Ec_5^c)\leq n\exp(-\frac{m}{2} (t_3 - \log (1+ t_3))).
 \]
Setting $t_1=0.5$ and $t_3=1$, it follows that $\P(\Ec_1^c)\leq n\ex^{-0.09m}$ and $\P(\Ec_5^c)\leq n\ex^{-0.84m}$. From Lemma \ref{lem:singvalues},  $\P(\Ec_4^c)\leq 2\exp(-\frac{m}{2})$. Setting $t_3=1$, Lemma \ref {lem:lowedbound:AD} implies that 
Finally, from Lemma  \ref{lemma:concent-z} combined with the union bound (since $|\Cc|\leq 2^{nr}$) implies that 
\[
\P((\Ec_2\cap\Ec_3)^c)\leq  2^{nr+1}\exp\Big(-{ct_2\min(1,{t_2\over 2m  }) \lambda^2_{\min}(AA^T) x^4_{\min} \over 2 x_{\max}^4\lambda^2_{\max}(A^TA)}\Big).
\]
Let
\[
c_3 \triangleq {c x^4_{\min} (1-2\sqrt{m/n})^4 \over 2 x_{\max}^4(1+2\sqrt{m/n})^4}.
\]
Then, for $t_2<2m$,
\begin{align}
\P((\Ec_2\cap\Ec_3)^c\cup\Ec_4)&\leq  2^{nr+1}\exp(-{c_3t_2^2/2m})\nonumber\\
&=\exp((nr+1) \log 2-{c_3t_2^2/2m}).
\end{align}
Let $t_2=\sqrt{(2\log 2)mnr (1+\epsilon)/c_3}$. Then,
\begin{align}
\P((\Ec_2\cap\Ec_3)^c\cup\Ec_4)&\leq \exp((nr+1) \log 2- (\log 2)nr (1+\epsilon))\nonumber\\
&= 2^{-nr \epsilon+1}.
\end{align}
Let $c_4=c_3/(8\log 2)$.  Since by assumption ${1\over n}\|\xv_o-{\xvt_o}\|^2\leq \delta$, we have ${1\over n}\|\xv_o^2-\xvt_o^2\|^2\leq 4x_{\max}^2\delta$. Also, $\|\xvh_o^2-\xv_o^2\|^2\geq 4x_{\min}^2\|\xvh_o-\xv_o\|^2$. Therefore,  from \eqref{eq:step3-main} it follows  that 
\begin{align}
 {c_1x^2_{\min} \over (1+\lambda_{m})^2} {1\over n} \| \xv_o- \xvh_o\|_2^2 \leq   \sqrt{(1+\epsilon) nr  \over c_4 m} +4 c_2 x^2_{\max}\delta.
\end{align}

\section{Conclusions}\label{sec:conc}

In this paper, we have studied compressed sensing recovery of structured signals  in the presence of speckle noise. In a compressed sensing system where every input pixel is distorted independently by a multiplicative Gaussian noise, we have derived a ML-based recovery method. We have used lossy  compression codes to model the structures of sources. We have shown that given sufficient number of measurements the ML-based recovery method is able to recover a signal from its under-sampled measurements, even in the presence of speckle  noise. To the best of our knowledge, this is the first theoretical result on estimation in the presence of speckle noise. The  ML-based optimization is computationally intractable and cannot be implemented. We have proposed employing projected gradient descent to approximate its solution. Our simulations results show the effectiveness of the proposed method.

\appendix

First note that $W= \sum_{i=1}^n w_i^2$ has a $\chi^2$ distribution $n$ degrees of freedom. Hence, its distribution is given by
\[
f_W(w) = \frac{w^{\frac{n}{2}-1} \exp(-\frac{n}{2}) }{2^{\frac{n}{2}} \Gamma (\frac{n}{2})}.  
\]
Hence, $\E (W^{\frac{1}{2}}) = \frac{\Gamma (\frac{n+1}{2}) \sqrt{2}}{\Gamma (\frac{n}{2})}$. Using the Stirling's formula for the $\Gamma$ function, we have
\begin{eqnarray*}
\log(\Gamma (\frac{n+1}{2}))- \log(\Gamma (\frac{n}{2})) &=& \frac{n+1}{2} \log (\frac{n+1}{2}) - \frac{n+1}{2} + \frac{1}{2} \log(\frac{4 \pi }{n+1})- \frac{n}{2} \log (\frac{n}{2}) + \frac{n}{2} - \frac{1}{2} \log \frac{4 \pi}{n} + O(\frac{1}{n^2}) \nonumber \\
&=& \frac{1}{2}\log (\frac{n+1}{2}) + \frac{n}{2}\log (\frac{n+1}{n})- \frac{1}{2} -\frac{1}{2} \log(\frac{n+1}{n}) + O(\frac{1}{n^2}) \nonumber \\
&=& \frac{1}{2} \log \frac{n}{2} + \frac{1}{2n} + \frac{n}{2} (\frac{1}{n} - \frac{1}{2 n^2}) - \frac{1}{2} - \frac{1}{2n} +O(\frac{1}{n^2}).  \nonumber \\
&=&   \frac{1}{2} \log \frac{n}{2} -\frac{1}{4n} + O(\frac{1}{n^2}). 
\end{eqnarray*}
Hence,
\begin{eqnarray*}
\E (W^{\frac{1}{2}}) = \sqrt{n} (1- \frac{1}{4n}) + o(\frac{1}{\sqrt{n}}). 
\end{eqnarray*}
Therefore,
\begin{eqnarray*}
\E \|\xvh_{\rm ML}- \xv\|_2^2= a^2 n \E ( ({1\over n}\sum_{i=1}^nw_i^2)^{1\over 2})-1)^2 = a^2 n (2- 2 \E  ({1\over n}\sum_{i=1}^nw_i^2)^{1\over 2}) = \frac{a^2}{2}. 
\end{eqnarray*}

\bibliographystyle{unsrt}
\bibliography{myrefs,references}

\end{document}